\theoremstyle{plain}
\newtheorem{theorem}{Theorem}
\newtheorem{lem}[theorem]{Lemma}
\theoremstyle{remark}
\newtheorem{rem}[theorem]{Remark}
\theoremstyle{corollary}
\newtheorem{cor}[theorem]{Corollary}
\begin{document}

\title{Mutual Information and Conditional Mean Prediction Error}

\author{
\IEEEauthorblockN{Clive G. Bowsher$^1$} 
and 
\IEEEauthorblockN{Margaritis Voliotis} \\ 
School of Mathematics, \\ University of Bristol, U.K. 
\thanks{$^1$ corresponding author: c.bowsher@bristol.ac.uk} 
}

\maketitle

\begin{abstract}
Mutual information is fundamentally important for measuring statistical
dependence between variables and for quantifying information transfer by signaling and communication mechanisms. It can, however, be challenging to evaluate
for physical models of such mechanisms and to estimate reliably from data. Furthermore,
its relationship to better known statistical procedures is still poorly
understood. Here we explore new connections between mutual information
and regression-based dependence measures, $\nu^{-1}$, that utilise the determinant of the second-moment matrix of the conditional mean prediction error. We examine
convergence properties as $\nu\rightarrow0$ and establish sharp lower
bounds on mutual information and capacity of the form $\mathrm{log}(\nu^{-1/2})$. The bounds are tighter than 
lower bounds based on the Pearson correlation and ones derived using average mean square-error rate distortion arguments. Furthermore, their estimation is feasible using techniques from nonparametric regression.  
As an illustration we provide bootstrap confidence intervals for the
lower bounds which, through use of a composite estimator, substantially improve upon inference about mutual information based on $k$-nearest neighbour estimators alone.
\end{abstract}

\begin{keywords}
Lower bound on mutual information, relative entropy, information capacity, nearest-neighbour estimator, correlation and dependence measures, regression.
\end{keywords}

\section{Introduction}

Mutual information is fundamentally important for measuring statistical
dependence between variables \cite{Linfoot1957,Joe1989,Brillinger2007,Reshef2011},
and for quantifying information transfer by engineered or naturally
occurring communication systems \cite{Shannon1948,Brennan2012}.
Statistical analysis using mutual information has been particularly
influential in neuroscience \cite{BialekSpikes1999}, and is becoming
so in systems biology for studying the biomolecular signaling networks
used by cells to detect, process and act upon the chemical signals
they receive \cite{Cheong2011,Bowsher2012,Bowsher2013}.
It can, however, be challenging to estimate mutual information reliably
with available sample sizes \cite{Panzeri2012}, and difficult to
derive mutual information and capacity exactly using mechanistic
models of the `channels' via which signals are conveyed. Furthermore, connections between mutual information and better
known statistical procedures such as regression, and their associated dependence measures, are still poorly understood. In order to address these challenges, the relationship between mutual information and the error incurred by estimation (or `prediction') using the conditional mean is now receiving attention. The focus has been on minimum mean square estimation error or, more generally, its average across the elements of the vector being estimated \cite{Prelov2004,Guo2005,Guo2008}. Instead, we focus on  
connections between mutual information and regression-based dependence measures, $\nu^{-1}$, that utilise the determinant of the second-moment matrix of the conditional mean prediction error. We examine convergence
properties as $\nu\rightarrow0$, and establish sharp lower bounds on
mutual information of the form $\mathrm{log}(\nu^{-1/2})$. The bounds are tighter than 
lower bounds based on the Pearson correlation and ones derived using average mean square-error rate distortion arguments.

The \emph{mutual information} between 2 random vectors $X$ and $Z$, written
$I(X;Z),$ is the Kullback-Leibler divergence between their joint
distribution and the product of their marginal distributions \cite{CoverThomasBook2006}. Mutual information thus measures the divergence between the joint distribution of $(X,Z)$ and the distribution in which $X$ and $Z$ are independent but have the same marginals. $I(X;Z)$ has desirable properties
as a measure of statistical dependence: it satisfies (after monotonic
transformation) all 7 of R\'{e}nyi's postulates \cite{Renyi1959} for
a dependence measure between 2 random variables, and underlies the
recently introduced maximal information coefficient \cite{Reshef2011}
for detecting pairwise associations in large data sets. Importantly, mutual information captures nonlinear dependence and dependence arising from moments higher than the conditional mean.

A decision-theoretic interpretation is indicative of the broad applicability of mutual information as a summary measure of statistical dependence. It can be shown \cite{Bernardo1979} that $I(X;Z)$ is equal to the increase in expected utility from reporting the posterior distribution of either of the two random vectors based on observation of the other, compared to reporting its marginal distribution---for example, reporting $p(Z|X=x)$ instead of $p(Z)$. This holds when the utility
function is a smooth, proper, local score function---as appropriate
for scientific reporting of distributions as `pure inferences' \cite{BernardoSmithBayesianTheoryBook2000}---because the logarithmic score function is the only score function having all these properties. 
In information theory, the supremum of $I(X;Z)$ over the set of allowed
input distributions $\mathcal{F}$, termed the \emph{information capacity}, equates to the maximal
errorless rate of information transmission over a noisy
channel when the channel is used for long times \cite{Shannon1948}.

The above discussion makes clear that, from a rich variety of perspectives, mutual information is fundamentally important for measuring statistical
dependence and for quantifying information
transfer by signaling and communication mechanisms.

Our general setting may be depicted as
\begin{eqnarray}
X \rightarrow Y \rightarrow Z,
\label{picture}
\end{eqnarray}
with $(X,Y,Z)$ a real-valued random vector. Here $Z$ is conditionally independent of $X$ given $Y$. The conditional distribution of $Z$ given $X$ is determined by some physical mechanism, whose `internal' variables are denoted by $Y$ in Eq. \ref{picture}. Such a mechanism is often termed a channel in information theory, although we do not restrict attention to signaling and communication channels here. Often we have in mind situations where $X$ causes $Z$ but not \emph{vice versa}, and the conditional distribution of $Z$ given $X$ does not depend on the experimental `regime' giving rise to the distribution of $X$ \cite{Dawid2010}. There is always an asymmetry between $X$ and $Z$ in the general setting we consider. We term $X$ the \emph{input} or \emph{treatment} because its marginal distribution can in principle be any distribution (although we may wish to restrict attention to particular classes thereof).  In contrast, the \emph{output} or \emph{response} $Z$ is the realisation of the mechanism given the input $X$. In general, not all marginal distributions for $Z$ can be obtained for a given mechanism by appropriate choice of the marginal of $X$. When analysing the probabilistic properties of physical models of mechanisms, the distribution (or set of distributions) for the input $X$ is given, but the marginal distribution of $Z$ is often unknown. In experimental settings, the input distribution is taken not to affect the conditional distribution of $Z$ given $X$, or else can sometimes be directly manipulated. 

Examples of our general setting include experimental design with $X$ as the treatment and $Z$ the response of interest; and signaling or communication channels with $X$ as the input signal and $Z$ its noisy representation. An example of a scientific area of application is the current effort to understand the biomolecular signaling mechanisms
used by living cells to relay the chemical signals
they receive from their environment \cite{Cheong2011}. Here, the interest is both in understanding why some biomolecular mechanisms perform better than others, and in measuring experimentally in the laboratory the mutual information between $X$ and $Z$ or the information capacity. Broadly speaking, the first involves deriving dependence measures between $X$ and $Z$ for different stochastic mechanisms (given a particular input distribution). The second might involve, for example, nonparametric estimation of the mutual information between the concentration of a chemical treatment applied to the cells and the level of an intracellular, biochemical output.

We note, however, that the formal statements of our results do not require any particular interpretation of $X$ and $Z$. Rather, the general setting just described motivates the results and places them in context. A sequential reading of Equations 2 to 12 inclusive provides a convenient preview of our theoretical results establishing lower bounds on mutual information and information capacity.

\section{Setup and Notation}

For random vectors $X$ and $Z$,
we define 
$\nu(Z|X)=\mathbb{\mathrm{\det}\left(E\mathit{\left\{ \mathbb{V}\left[Z|X\right]\right\} }\right)}\mathrm{/\det}\left(\mathbb{V}\left[\mathit{Z}\right]\right)$, where $\mathbb{V}$ denotes a covariance matrix.
In general, $\nu(Z|X)$ is not equal
to $\nu(X|Z)$. For 2 scalar random variables, $\nu(Z|X)$ is equal
to the minimum mean square estimation error or minimum MSEE for estimation
of $Z$ using $X$, normalised by the variance of $Z$ (since $\mathbb{E}\left[Z|X\right]$
is the optimal estimator). We denote the optimal estimation
or `prediction' error by $e(Z|X)=Z-\mathbb{E}\left[Z|X\right]$. In general, $\nu(Z|X)$ is the ratio of the determinant of the second-moment matrix of the
error $e(Z|X)$ and the determinant of the variance matrix of $Z$, that is
\begin{eqnarray}
\nu(Z|X)=\frac{ \mathbb{\mathrm{\det}\left(E\mathit{\left[e(Z|X)e(Z|X)^{\mathrm{T}}\right]}\right)} } {\mathrm{\det}\left(\mathbb{V}\left[\mathit{Z}\right]\right)}.
\end{eqnarray}

We will show that $\nu(Z|X)^{-1}$
provides a generalised measure of `signal-to-noise', applicable to
non-Gaussian settings, that relies on first and second conditional
moments of $Z$ given $X$ (via the law of total variance) rather than on all features of the joint distribution. We make few assumptions about the conditional density describing the mechanism or channel $f(z|x)$, except that the conditional mean $m(x)=\mathbb{E}\left[Z|X=x\right]$ is an invertible, continuously differentiable function of $x$. 
A central result of the paper (see Theorem \ref{thm:ZgXlwrbnd} and Corollary \ref{rdarg}) is then that
\begin{equation}
I(X;Z)\geq\mathrm{\log}\left\{\nu(Z|X)^{-1/2}\right\}
\geq \frac{-d_Z}{2}\mathrm{\log}\left\{ \frac{d_{Z}^{-1}\mathrm{tr}(\mathbb{E}\{\mathbb{V}[Z|X]\})}{\mathrm{[\det}(\mathbb{V}[Z])]^{d_{Z}^{-1}}}\right\} ,
\label{main}
\end{equation}
where all terms are evaluated under the joint density for $(X,Z)$ implied by the channel $f(z|x)$ and a Gaussian density for the transformed input, $m(X)$. Here $d_Z$ is the dimension of the vector $Z$. The second term in Eq. \ref{main} is our lower bound utilising the determinant of the second-moment matrix of the prediction error of the conditional mean $\mathbb{E}[Z|X]$, while the third term instead utilises the average mean square error of that conditional mean. We discuss the relation of the third term to rate distortion arguments later in the paper. Notice that characterising the first and second conditional moments of the mechanism, $\mathbb{E}[Z|X]$ and $\mathbb{V}[Z|X]$, is enough (via the law of total variance) to evaluate the lower bound $\mathrm{\log}\left\{\nu(Z|X)^{-1/2}\right\}$ for a given Gaussian distribution of $m(X)$. Maximising the bound over such distributions then also yields a useful bound on the information capacity.

As a first step in analysing connections between mutual information and our regression-based measures,
we explore the relationship between the convergence to zero of $\nu(Z|X)$
or $\nu(X|Z)$, and the convergence of mutual information. For simplicity,
we analyse the bivariate case where the variable $X$ has finite support, for
example a finite collection of treatment concentrations in a cell
signaling experiment. We write $I$ for mutual information,
$H$ for discrete entropy and $h$ for differential entropy.

\section{Convergence properties}

\begin{theorem}
\label{thm:ZgXasymp}Let $(X_{n},Z_{n})$ be a sequence of pairs of real-valued
random variables, with the support of $X_{n}$ given by a finite set $\mathcal{X}_{n}$
($|\mathcal{X}_{n}|\geq2$ and bounded above by a constant $\forall n$).
Write $m_{n}(X_{n})$ for the function $\mathbb{E}[\breve{Z}_{n}|X_{n}]$,
where $\breve{Z}_{n}\triangleq Z_{n}\mathbb{V}[Z_{n}]^{-\frac{1}{2}}.$
Denote its support by $\mathcal{M}_{n} = \{m_{n}(x);x\in\mathcal{X}_{n}\}\subset\mathbb{R}$.
Let $\epsilon_{n}^{\ast}$ be 1/2 of the minimum distance between
any two points in $\mathcal{M}_{n}$ and $\epsilon^{\ast}\triangleq\inf\{\epsilon_{n}^{\ast}\}.$
Suppose that:

(1) $ \nu(Z_{n}|X_{n})\rightarrow0$ as $n\rightarrow\infty$; (2) the functions $m_{n}$ are one-to-one mappings from $\mathcal{X}_{n}$
to the real line; and (3) $\epsilon^{\ast}>0$, so that any pair of
points in a support $\mathcal{M}_{n}$ are separated by at least $2\epsilon^{\ast}$.
Then as $n\rightarrow\infty$ , $H(X_{n}|Z_{n})\rightarrow0$ and
therefore $H(X_{n})-I(X_{n};Z_{n})\rightarrow0$. 
\end{theorem}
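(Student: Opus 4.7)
The plan is to reduce the claim to an application of Fano's inequality by exhibiting a near-perfect nearest-neighbour decoder of $X_n$ from $Z_n$, and then using the separation hypothesis to control its error probability.

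First, I would reformulate hypothesis (1) in terms of a mean-square error. Since $\breve{Z}_n = Z_n \mathbb{V}[Z_n]^{-1/2}$ is a deterministic rescaling of $Z_n$, we have $\mathbb{V}[\breve{Z}_n\mid X_n] = \mathbb{V}[Z_n\mid X_n]/\mathbb{V}[Z_n]$, so in the scalar setting
\begin{equation*}
\mathbb{E}\!\left[\bigl(\breve{Z}_n - m_n(X_n)\bigr)^2\right] \;=\; \mathbb{E}\bigl\{\mathbb{V}[\breve{Z}_n\mid X_n]\bigr\} \;=\; \nu(Z_n\mid X_n) \;\longrightarrow\; 0.
\end{equation*}
Writing $e_n := \breve{Z}_n - m_n(X_n)$, Markov's inequality then yields $P(|e_n| > \epsilon^{\ast}) \le \nu(Z_n\mid X_n)/(\epsilon^{\ast})^2 \to 0$, using hypothesis (3) to guarantee $\epsilon^{\ast}>0$.

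Next I would construct a decoder. Let $\hat{X}_n$ be any measurable function of $\breve{Z}_n$ (equivalently of $Z_n$) that returns the unique $x\in\mathcal{X}_n$ minimising $|m_n(x) - \breve{Z}_n|$, with ties broken arbitrarily. Because the points of $\mathcal{M}_n$ are pairwise separated by at least $2\epsilon^{\ast}$ (hypothesis (3)) and $m_n$ is one-to-one on $\mathcal{X}_n$ (hypothesis (2)), the event $\{|e_n|\le \epsilon^{\ast}\}$ forces $m_n(\hat{X}_n) = m_n(X_n)$ and hence $\hat{X}_n = X_n$. Consequently
\begin{equation*}
P_{e,n} \;:=\; P(\hat{X}_n \neq X_n) \;\le\; P(|e_n| > \epsilon^{\ast}) \;\longrightarrow\; 0.
\end{equation*}

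Finally, Fano's inequality applied to the decoder $\hat{X}_n$ of the discrete variable $X_n$ gives
\begin{equation*}
H(X_n \mid Z_n) \;\le\; H(X_n \mid \hat{X}_n) \;\le\; h_2(P_{e,n}) + P_{e,n}\,\log\bigl(|\mathcal{X}_n| - 1\bigr),
\end{equation*}
where $h_2$ is the binary entropy. Since $|\mathcal{X}_n|$ is uniformly bounded in $n$ by assumption, both terms on the right tend to zero, so $H(X_n\mid Z_n)\to 0$. The identity $I(X_n;Z_n) = H(X_n) - H(X_n\mid Z_n)$ then delivers the second conclusion.

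The main substantive step is the separation-based decoding argument that converts the $L^2$ smallness of the prediction error $e_n$ into a vanishing probability of misclassification; everything else is the standard Markov--Fano mechanism. The only place where care is needed is in explicitly using the normalisation by $\mathbb{V}[Z_n]^{1/2}$, because the separation hypothesis is stated on $\mathcal{M}_n$ rather than on the raw conditional mean of $Z_n$; the rescaling ensures that the $L^2$ bound on $e_n$ is exactly $\nu(Z_n\mid X_n)$, which is what the hypothesis controls.
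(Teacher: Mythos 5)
Your proposal is correct and follows essentially the same route as the paper's proof: rewrite $\nu(Z_n|X_n)$ as the mean-square error of $\breve{Z}_n-m_n(X_n)$, use the $2\epsilon^{\ast}$ separation to show the nearest-point decoder is exact whenever the error is below $\epsilon^{\ast}$, and conclude via Fano's inequality. The only cosmetic difference is that you invoke Chebyshev's inequality explicitly to quantify the error probability, where the paper simply passes from $L^2$ convergence to convergence in probability.
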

In Theorem \ref{thm:ZgXasymp}, the response variable $Z$ is real-valued
and can, for example, be either a continuous or discrete random variable.
Theorem \ref{thm:ZgXasymp} establishes the convergence of the mutual
information $I(X;Z)$ to the entropy of $X$ as $\nu(Z|X)\rightarrow0$, under the condition that the conditional mean $\mathbb{E}[Z|X]$ is
an invertible function of $X$. By definition, $I(X;Z)$ cannot be greater than the entropy of $X$.
The intuition for the result in Theorem \ref{thm:ZgXasymp} is that the convergence of $\nu(Z|X)$ to zero enables construction
of a point estimator of $X$ (utilising the conditional expectation
$\mathbb{E}[Z|X]$) whose performance becomes `perfect' in this limit. The condition requiring invertibility of the conditional mean function would be needed even in the case where $Z$ is a deterministic function of $X$, otherwise $I(X;Z) \leq H(Z) < H(X)$.

We have rescaled the output $Z_{n}$
so that $\mathbb{V}[\breve{Z}_{n}]$ is constant at $1$ for all $n$.
In particular, Theorem \ref{thm:ZgXasymp} does \textit{not} require
the minimum mean square estimation error, $\mathbb{E}\{(\mathit{Z-\mathbb{E}[Z|X]})^{2}\},$
to converge to zero as $n\rightarrow\infty$. A physical example where the minimum MSEE diverges but Theorem \ref{thm:ZgXasymp} applies is given by a molecular signaling system, with $Z$ the number of output molecules, which is operating in the macroscopic (or large system size) limit where the dynamics of chemical concentrations are deterministic, conditional on the input $X$.
\begin{proof}
We have that $\nu(Z_{n}|X_{n})=\nu(\breve{Z}_{n}|X_{n})\rightarrow0$.
Since $\nu(\breve{Z}_{n}|X_{n})=\mathbb{E}\{\mathbb{V}[\breve{Z}_{n}|X_{n}]\}$
$=\mathbb{E}\{(\mathit{\breve{Z}_{n}-\mathbb{E}[\breve{Z}_{n}|X_{n}]})^{2}\}$,
it follows that $\breve{Z}_{n}-\mathbb{E}[\breve{Z}_{n}|X_{n}]$ converges
to zero in mean square (in $L^{2}$) and therefore $\breve{Z}_{n}-\mathbb{E}[\breve{Z}_{n}|X_{n}]\rightarrow^{pr}0$
(where $\rightarrow^{pr}$ denotes convergence in probability). Since
$H(X_{n}|\breve{Z}_{n})=H(X_{n}|Z_{n})$ $=H(X_{n})-I(X_{n},Z_{n}),$
we must establish that $H(X_{n}|\breve{Z}_{n})\rightarrow0$ as $n\rightarrow\infty$.
Consider estimating $X_{n}$ based on observation of $\breve{Z}_{n}$
using the estimator, $\widehat{X}_{n}(\breve{Z}_{n})$, which is equal
to the a point $x\in\mathcal{X}_{n}$ that minimises the distance
$|\breve{Z}_{n}-m_{n}(x)|$. Condition (3) applies not to $\mathbb{E}[Z_{n}|X_{n}=x]$
but to $\mathbb{E}[\breve{Z}_{n}|X_{n}=x].$ Let $z\in\mathbb{R},m_{n}\mathcal{\in M}_{n}$
and notice that if $|\mathit{z-}m_{n}|<\epsilon^{*},$ then $\mathit{|z-}m_{n}|<\epsilon_{n}^{*}<\mathit{|z-}m_{n}'|$,
that is $m_{n}$ is closer to $z$ than is any other point $m_{n}'$
in $\mathcal{M}_{n}$. Therefore, if $|\mathbb{\mathit{\breve{Z}_{n}-}E}[\breve{Z}_{n}|X_{n}]|<\epsilon^{*},$
then $\mathbb{E}[\breve{Z}_{n}|X_{n}]$ is located at the unique point
$m_{n}\mathcal{\in M}_{n}$ that minimises $|\breve{Z}_{n}-m_{n}|$
and, using condition (2), $\widehat{X}_{n}(\breve{Z}_{n})=X_{n}$,
that is, the estimator recovers $X_{n}$ without error$.$ The probability 
of estimation error, $p_{\mathrm{error}},$ satisfies 
\[
\mathbb{\mathit{p\mathrm{_{error}}=}P\left\{ \mathit{\hat{X}_{n}\left(\breve{Z}_{n}\right)\neq X_{n}}\right\} }\leq\mathbb{P}\left\{ |\mathbb{\mathit{\breve{Z}_{n}-}E}[\breve{Z}_{n}|X_{n}]|\geq\epsilon^{*}\right\} ,
\]
and therefore $p_{\mathrm{error}}\rightarrow0$ as $n\rightarrow\infty.$
Fano's Inequality \cite{CoverThomasBook2006} then implies that
$H(X_{n}|\breve{Z}_{n})\rightarrow0$ as required.
\end{proof}
We have thus shown in the context of Theorem \ref{thm:ZgXasymp} that the limit $\nu(Z|X)\rightarrow0$
characterises a regime of large signal-to-noise for the input
$X$, without the need to impose further conditions on the joint distribution of $X$ and $Z$. In \cite{Prelov2004}, the authors consider the opposite regime of low
signal-to-noise for non-linear channels with additive Gaussian noise.
They obtain an asymptotic expansion of the mutual information, $I(X;Z)$, whose leading term
is a decreasing function of a variable which, in their setting, is
equal to $\nu(Z|X)$. In Theorem \ref{thm:XgZasymp} below, we consider
the case where the conditioning is on the response $Z$ instead of on the input $X$ (see Section \ref{depreg} for further discussion of regression on the response variable).
\begin{theorem}
\label{thm:XgZasymp}Let $(X_{n},Z_{n})$ be a sequence of pairs of real-valued
random variables, with the support of $X_{n}$ given by a finite set $\mathcal{X}_{n}$
($|\mathcal{X}_{n}|\geq2$ and bounded above by a constant $\forall n$).
Define $\breve{X}_{n}\triangleq X_{n}\mathbb{V}[X_{n}]^{-\frac{1}{2}}$,
with support \textup{$\mathcal{\breve{X}}_{n}$, }and let $\epsilon_{n}^{\ast}$
be 1/2 of the minimum distance between any two points in $\breve{\mathcal{X}}_{n}$.
Suppose that $\epsilon^{\ast}\triangleq\inf\{\epsilon_{n}^{\ast}\}>0.$
If $\nu(X_{n}|Z_{n})\rightarrow0$ as $n\rightarrow\infty$, then
$H(X_{n}|Z_{n})\rightarrow0$ and $H(X_{n})-I(X_{n};Z_{n})\rightarrow0$. \end{theorem}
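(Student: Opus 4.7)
The plan is to mirror the proof of Theorem \ref{thm:ZgXasymp} with the roles of input and output exchanged. The set-up is actually simpler here because the prediction target $X_n$ already takes values in a finite, well-separated set (after rescaling), so no invertibility condition on a conditional mean function is needed; the separation hypothesis is placed directly on $\breve{\mathcal{X}}_n$.

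First, I would reduce to the standardised variable. Scaling cancels in numerator and denominator of the definition, so $\nu(X_n|Z_n)=\nu(\breve{X}_n|Z_n)$. Because $\mathbb{V}[\breve{X}_n]=1$, the hypothesis $\nu(X_n|Z_n)\to 0$ is exactly $\mathbb{E}\{(\breve{X}_n-\mathbb{E}[\breve{X}_n|Z_n])^2\}\to 0$, so $\breve{X}_n-\mathbb{E}[\breve{X}_n|Z_n]\to^{pr}0$. Next I would construct a decoder $\widehat{X}_n(Z_n)$ that returns the value $x\in\mathcal{X}_n$ whose rescaled image $\breve{x}=x\,\mathbb{V}[X_n]^{-1/2}$ minimises $|\mathbb{E}[\breve{X}_n|Z_n]-\breve{x}|$ over $\breve{\mathcal{X}}_n$, with ties broken arbitrarily.

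The key deterministic lemma, obtained by the same triangle-inequality argument used in Theorem \ref{thm:ZgXasymp}, is that on the event $\{|\breve{X}_n-\mathbb{E}[\breve{X}_n|Z_n]|<\epsilon^{*}\}$ one has $\widehat{X}_n(Z_n)=X_n$: indeed any other $\breve{x}'\in\breve{\mathcal{X}}_n$ satisfies $|\breve{x}'-\breve{X}_n|\geq 2\epsilon^{*}$ by the separation hypothesis, hence $|\breve{x}'-\mathbb{E}[\breve{X}_n|Z_n]|>\epsilon^{*}$. It follows that
\[
p_{\mathrm{error}}=\mathbb{P}\{\widehat{X}_n(Z_n)\neq X_n\}\leq \mathbb{P}\{|\breve{X}_n-\mathbb{E}[\breve{X}_n|Z_n]|\geq\epsilon^{*}\}\to 0.
\]

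Finally I would invoke Fano's inequality in the form $H(X_n|Z_n)\leq H(p_{\mathrm{error}})+p_{\mathrm{error}}\log(|\mathcal{X}_n|-1)$; since $|\mathcal{X}_n|$ is bounded above by a constant, this upper bound vanishes, giving $H(X_n|Z_n)\to 0$, and consequently $H(X_n)-I(X_n;Z_n)=H(X_n|Z_n)\to 0$. There is no real obstacle: the only subtle point to keep clearly in mind is that the separation $\epsilon^{*}$ refers to the rescaled support $\breve{\mathcal{X}}_n$, which is precisely the scale on which $L^2$ convergence of the prediction error is delivered by the normalisation in $\nu$.
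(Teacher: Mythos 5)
Your proposal is correct and follows essentially the same route as the paper's own proof in the Appendix: standardise to $\breve{X}_n$, note that $\nu(\breve{X}_n|Z_n)$ equals the mean-square prediction error so that $\breve{X}_n-\mathbb{E}[\breve{X}_n|Z_n]\rightarrow^{pr}0$, build the nearest-point decoder, bound $p_{\mathrm{error}}$ via the $\epsilon^{\ast}$-separation and triangle inequality, and conclude with Fano's inequality using the bounded alphabet size. No gaps.
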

\begin{proof}
Given in the Appendix, using an argument similar to the proof of Theorem \ref{thm:ZgXasymp}.
\end{proof}

\noindent Again, the mutual
information $I(X;Z)$ converges to the entropy of $X$, this time as $\nu(X|Z)\rightarrow0$.
(We note in passing that, if both $X$ and $Z$ have finite support, a corollary
of Theorem \ref{thm:XgZasymp} is that: either $H(X_{n})-H(Z_{n})\rightarrow0,$
or $\nu (X_{n}|Z_{n})$ and $\nu (Z_{n}|X_{n})$
do not simultaneously converge to zero).

Theorems \ref{thm:ZgXasymp} and \ref{thm:XgZasymp} establish 
connections between regression-based measures of dependence $\nu ^{-1}$ and mutual information,
without imposing strong assumptions about the joint distribution of
$(X,Z).$ We conjecture that similar theorems will hold for random variables
$X$ and $Z$ having a joint density with respect to Lebesgue measure. These theorems indicate the possibility, explored below, of lower bounding the mutual information using $\nu ^{-1}$.
A consequence of our Theorem \ref{thm:ZgXlwrbnd}
is that, for the general multivariate, absolutely continuous case,
$I(X;Z)\rightarrow\infty$ when $\nu(Z|X)\rightarrow 0$
(and $\mathbb{E}[Z|X=x]$
is an invertible mapping), and $\nu$ is evaluated under the appropriate
marginal distribution for the input $X$.

\section{Lower bounds on mutual information and capacity}

Our aim in this and the subsequent section is to establish lower bounds on mutual information, $I(X;Z)$,
for an absolutely continuous random vector $(X,Z)$ with finite dimension
$d\geq2$. These bounds will hold under certain marginal distributions for the input $X$, and also provide lower bounds on the information capacity. We first show that the following Lemma holds when the marginal
distribution of $X$ is normal. The Lemma relates the mutual information of $X$ and $Z$ to their variances and covariance.

\begin{lem}
\label{lem:lbndCorr}Let $(X,Z)$ be a random vector in $\mathbb{R^{\mathrm{\mathit{d}}}}$,
$d\geq2$, with joint density $f(x,z)$ with respect to Lebesgue
measure. Suppose the density of $X$, $f(x),$ is multivariate normal and that
$(X,Z)$ has finite variance matrix under $f$. Then 
\begin{eqnarray}
I_{f}(X;Z) &\geq& \mathrm{\mathrm{\log}\left\{ \frac{\mathrm{\det}(\Sigma_{\mathit{XX}})}{det(\Sigma_{\mathit{XX}}-\Sigma_{\mathit{XZ}}\Sigma\mathit{_{ZZ}^{-\mathrm{1}}}\Sigma_{\mathit{ZX}})}\right\}_{\mathit{f}} ^{1/2}}  \label{eq:lbndCorr} \\
&=&\mathrm{\log}\left\{ \frac{\mathrm{\det}(\Sigma_{\mathit{ZZ}})}{\mathrm{det}(\Sigma_{\mathit{ZZ}}-\Sigma_{\mathit{ZX}}\Sigma\mathit{_{XX}^{-\mathrm{1}}}\Sigma_{\mathit{XZ}})}\right\}_{\mathit{f}} ^{1/2} \notag
\end{eqnarray}
where, for example, $\Sigma_{ZX}=\mathrm{Cov}(X,Z)=\mathbb{E}\{(X-\mathbb{E}[X])(Z-\mathbb{E}[Z])^{\mathrm{T}}\},$
and subscript $\mathit{f}$ indicates that the mutual information and the covariance matrices are those under
the joint density $f(x,z).$\end{lem}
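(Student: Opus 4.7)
The plan is to combine two standard ingredients: the exact value of $h_{f}(X)$, available because $X$ is Gaussian under $f$, and a Gaussian maximum-entropy upper bound on $h_{f}(X|Z)$ that uses only second moments. Writing $I_{f}(X;Z)=h_{f}(X)-h_{f}(X|Z)$ and invoking normality of the marginal of $X$ gives $h_{f}(X)=\tfrac{1}{2}\log\{(2\pi e)^{d_{X}}\det\Sigma_{XX}\}$ exactly, where $d_{X}$ is the dimension of $X$.

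To bound $h_{f}(X|Z)$ from above I would introduce the linear least-squares predictor $\hat{X}(Z)\triangleq\mathbb{E}_{f}[X]+\Sigma_{XZ}\Sigma_{ZZ}^{-1}(Z-\mathbb{E}_{f}[Z])$ and its residual $\tilde{e}\triangleq X-\hat{X}(Z)$, whose second-moment matrix is $\Sigma_{XX}-\Sigma_{XZ}\Sigma_{ZZ}^{-1}\Sigma_{ZX}$ by a short orthogonality calculation. Because $\hat{X}(Z)$ is a measurable function of $Z$, translation invariance of differential entropy gives $h_{f}(X|Z)=h_{f}(\tilde{e}|Z)$; then $h_{f}(\tilde{e}|Z)\leq h_{f}(\tilde{e})\leq\tfrac{1}{2}\log\{(2\pi e)^{d_{X}}\det(\Sigma_{XX}-\Sigma_{XZ}\Sigma_{ZZ}^{-1}\Sigma_{ZX})\}$, by, respectively, the fact that conditioning cannot increase differential entropy and the Gaussian maximum-entropy inequality at fixed covariance. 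Subtracting, the $(2\pi e)^{d_{X}}$ factors cancel and the first form of \eqref{eq:lbndCorr} appears.

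The equality between the two forms in \eqref{eq:lbndCorr} is then pure linear algebra: Schur's determinant identity applied to the joint covariance of $(X,Z)$ yields $\det\Sigma=\det(\Sigma_{XX})\det(\Sigma_{ZZ}-\Sigma_{ZX}\Sigma_{XX}^{-1}\Sigma_{XZ})=\det(\Sigma_{ZZ})\det(\Sigma_{XX}-\Sigma_{XZ}\Sigma_{ZZ}^{-1}\Sigma_{ZX})$, so both displayed ratios equal $\det(\Sigma_{XX})\det(\Sigma_{ZZ})/\det\Sigma$.

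I do not anticipate a serious obstacle: each inequality is a textbook tool. The one point worth emphasising is that Gaussianity is used only for the \emph{marginal} of $X$, in evaluating $h_{f}(X)$; neither joint Gaussianity of $(X,Z)$ nor Gaussianity of the residual $\tilde{e}$ is required, because the maximum-entropy upper bound on $h_{f}(\tilde{e})$ depends only on the second-moment matrix of $\tilde{e}$. This asymmetric derivation nevertheless produces a symmetric-looking bound, thanks to the Schur identity.
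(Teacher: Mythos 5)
Your proof is correct, and it takes a genuinely different route from the paper's. The paper argues by comparison with an auxiliary Gaussian density $g(x,z)$ having the same first and second moments as $f$ and marginal $g(x)=f(x)$: it decomposes $I_{f}(X;Z)$ as $\mathbb{E}_{f}\{\log[g(X,Z)/(f(X)g(Z))]\}$ minus a term shown to be non-positive by Jensen's inequality, then uses the identity $\mathbb{E}_{f}[\log g(\cdot)]=\mathbb{E}_{g}[\log g(\cdot)]$ (the Gaussian log-density being quadratic, so its expectation depends only on second moments) to recognise the surviving term as $I_{g}(X;Z)$, the explicitly computable mutual information of the moment-matched Gaussian; the determinant formulas then follow by partitioning the joint covariance. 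You instead write $I_{f}(X;Z)=h_{f}(X)-h_{f}(X|Z)$, evaluate $h_{f}(X)$ exactly from the Gaussian marginal, and upper-bound $h_{f}(X|Z)$ via the linear least-squares residual $\tilde{e}$, using translation invariance of differential entropy, the fact that conditioning cannot increase it, and the Gaussian maximum-entropy bound at fixed second moments; the Schur determinant identity gives the second form. Both derivations are sound and both use Gaussianity only through the marginal of $X$ together with the second moments of $(X,Z)$. What the paper's route buys is the explicit identification of the lower bound as $I_{g}(X;Z)$ for the matched Gaussian $g$, which is precisely the interpretation leaned on in the remark following the lemma (to explain why an analogous strategy matching $\nu(Z|X)$ fails). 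What your route buys is a shorter, more elementary argument that in fact parallels the paper's own proof of Lemma \ref{thm:XgZlwrbnd}; it also makes visible that only the \emph{linear} regression residual covariance is needed here, and that substituting the conditional mean $\mathbb{E}[X|Z]$ for your linear predictor (plus Jensen's inequality on $\log\det$) would immediately yield the sharper, nonlinear-regression bound of that later lemma.
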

\begin{proof}
Given in the Appendix.
\end{proof}
When $d=2$ (and, again, $f(x)$ is normal), Lemma \ref{lem:lbndCorr}
simplifies to give the lower bound \[I(X;Z)\geq\mathrm{log\{[1-Corr^{2}(\mathit{X,Z})]^{-1/2}\}},\]
where $\mathrm{Corr}$ denotes the Pearson correlation. This inequality shows how to relate perhaps the best known measure of association to mutual information. An outline
proof is given for the $d=2$ case in \cite{Mitra2001}.
The multivariate,
$d>2$, case has not to our knowledge appeared previously, although
it is a reasonably straightforward generalisation. We provide a complete
proof for $d\geq2$ in the Appendix.
\begin{rem}
The lower bounds in Eq. \ref{eq:lbndCorr} are the mutual information
of a multivariate normal density, $g$, with identical covariance
matrix to that of $f,$ the true joint density of $X$ and $Z$. We want to obtain a lower bound for $I_{\mathit{f}}(X;Z)$
in terms of $\nu_{f}(Z|X).$ It might appear that one way to do so would
be to adopt a similar strategy, attempting to show that $I_{\mathit{f}}(X;Z)$
is bounded below by the mutual information of a multivariate normal,
$g$, having identical $\nu(Z|X)$ to $f$ (and with the marginal
density $g(x)=f(x)$). However, this strategy fails. Although $I_{g}(X;Z)$
still depends only on $\nu(Z|X),$ the equality $\mathbb{E_{\mathit{f}}}\{\mathrm{log}[g(X,Z)/f(X)g(Z)]\}=\mathbb{E_{\mathit{g}}}\{\mathrm{log}[g(X,Z)/f(X)g(Z)]\}$
no longer holds in general (see Eq. \ref{eq:careMS-1} and subsequent argument in the Appendix).
\end{rem}
Our general strategy to obtain lower bounds on mutual information and capacity is as follows. First, we specify a channel with suitable `pseudo-output' and Gaussian `pseudo-input'. These are transformed versions of $Z$ and $X$ respectively (sometimes we transform $X$ alone).  The transformations are given by the conditional mean functions: for example, the pseudo-input may be $m(X) = \mathbb{E}[Z|X]$, as in Theorem \ref{thm:ZgXlwrbnd} below. Second, we apply Lemma \ref{lem:lbndCorr} with the pseudo-input and pseudo-output in place of $X$ and $Z$ there. We then
make use of the following relationship that holds for any random vector
$(U,W)$: 
\begin{eqnarray}
\mathrm{Cov}(U,\mathbb{E}[U|W]) & = & \mathbb{E}\left\{\mathbb{E}\left[(U-\mathbb{E}[U])(\mathbb{E}[U|W]-\mathbb{E}[U])^{\mathrm{T}}|W\right]\right\} \label{eq:CovVE} \\
 & = & \mathbb{V}\{\mathbb{E}[U|W]\}. \notag 
\end{eqnarray}

We now use this strategy to obtain a lower bound for the mutual information, $I_{\mathit{f}}(X;Z)$,
in terms of $\nu_{f}(Z|X).$

\begin{theorem}
\label{thm:ZgXlwrbnd}Let $(X,Z)$ be a random vector in $\mathbb{R^{\mathrm{\mathit{d}}}},d\geq2$.
Consider the conditional density (with respsect to Lebesgue measure), $f(z|x)$,
and suppose that $m(x)=\mathbb{E}[Z|X=x]$ is a one-to-one, continuously
differentiable mapping (whose domain is an open set and a support
of $X$). Let $m(X)$ be normally distributed and denote by $f(x)$ the implied density of $X$. Then 
\begin{equation}
I_{f}(X;Z)\geq\mathrm{\log}\left\{\nu_{f}(Z|X)^{-1/2}\right\},\label{eq:ZgXlwrbound}
\end{equation}
where we assume moments are finite such that $\nu_{f}(Z|X)$ is well defined
and that $\mathbb{E}\mathit{\left\{ \mathbb{V}\left[Z|X\right]\right\} }$
is non-singular under the joint density, $f$. This lower bound is sharp since $I_{f}(X;Z)=\log\left\{\nu_{f}(Z|X)^{-1/2}\right\}$
when $f(x,z)$ is a multivariate normal density. Furthermore, the information capacity $C_{Z|X}$ satisfies
\begin{equation}
C_{Z|X}\geq\mathrm{\log}\left\{\nu_{f}(Z|X)^{-1/2}\right\}, \label{capacitybnd}
\end{equation}
provided $f(x)$ is the density of an allowed input distribution, that is a distribution in $\mathcal{F}$.
\end{theorem}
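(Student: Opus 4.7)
The plan is to invoke Lemma~\ref{lem:lbndCorr} after replacing $X$ with the transformed ``pseudo-input'' $m(X)=\mathbb{E}[Z\mid X]$, which is Gaussian by hypothesis, while keeping $Z$ as the ``pseudo-output''. Since $m$ is a continuously differentiable one-to-one mapping on an open domain, it is a bimeasurable bijection onto its range, and hence
\begin{equation*}
I_f(X;Z)\;=\;I_f(m(X);Z).
\end{equation*}
Joint moments are finite by assumption, so Lemma~\ref{lem:lbndCorr} applies directly to the pair $(m(X),Z)$.

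The next step is purely algebraic. Write $V\triangleq\mathbb{V}\{m(X)\}$. Applying identity~(\ref{eq:CovVE}) with $U=Z$ and $W=X$ gives
\begin{equation*}
\Sigma_{Z,m(X)}\;=\;\mathrm{Cov}(Z,\mathbb{E}[Z\mid X])\;=\;V,
\end{equation*}
and of course $\Sigma_{m(X),m(X)}=V$. Substituting these into the second form of the bound in Lemma~\ref{lem:lbndCorr}, the Schur-type expression collapses:
\begin{equation*}
\Sigma_{ZZ}-\Sigma_{Z,m(X)}\,\Sigma_{m(X),m(X)}^{-1}\,\Sigma_{m(X),Z}\;=\;\Sigma_{ZZ}-V\;=\;\mathbb{E}\{\mathbb{V}[Z\mid X]\},
\end{equation*}
the final equality being the law of total variance. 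Taking determinants and logarithms yields exactly $\log\{\nu_f(Z\mid X)^{-1/2}\}$, which establishes~(\ref{eq:ZgXlwrbound}).

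Sharpness follows at once: when $f(x,z)$ is jointly Gaussian, the conditional mean $m(X)$ is a linear function of the Gaussian $X$ and so is itself Gaussian, and Lemma~\ref{lem:lbndCorr} holds with equality in the Gaussian case (a Gaussian joint density matched to the first two moments carries identical mutual information). Every inequality in the chain therefore closes up. The capacity bound~(\ref{capacitybnd}) is then immediate: $C_{Z\mid X}$ is the supremum of $I_f(X;Z)$ over admissible inputs, and by hypothesis the $f(x)$ corresponding to any Gaussian $m(X)$ lies in $\mathcal{F}$.

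The main conceptual obstacle is spotting that the correct pseudo-input is $m(X)$ rather than $X$. Using $X$ directly would both force $X$ itself to be Gaussian---an unnecessary restriction---and leave the Schur complement $\Sigma_{ZZ}-\Sigma_{ZX}\Sigma_{XX}^{-1}\Sigma_{XZ}$ on the right-hand side, which in general bears no clean relation to $\mathbb{E}\{\mathbb{V}[Z\mid X]\}$. The telescoping via~(\ref{eq:CovVE}) works precisely because $m(X)$ is $\sigma(X)$-measurable; once that observation is in place, the remainder is routine matrix algebra and the law of total variance.
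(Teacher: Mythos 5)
Your proof is correct and follows essentially the same route as the paper's: both take $m(X)=\mathbb{E}[Z\mid X]$ as the Gaussian pseudo-input, invoke Lemma~\ref{lem:lbndCorr} for the pair $(m(X),Z)$, collapse the Schur complement to $\mathbb{E}\{\mathbb{V}[Z\mid X]\}$ via the identity in Eq.~\ref{eq:CovVE} together with the law of total variance, and obtain sharpness and the capacity bound in the same way. No substantive differences to report.
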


\begin{proof}
Consider the mechanism $M\rightarrow X\rightarrow Z$, where $X=m^{-1}(M)$
and $X$ is then applied to the channel $f(Z|X).$ Here $M=m(X)$ is the transformed or pseduo-input. Notice that $\mathbb{E}[Z|M]=\mathbb{E}[Z|X]=M$ because $\sigma(X) = \sigma(M)$,
and therefore $\mathrm{Cov}(M,Z)=\mathrm{Cov}(Z,\mathbb{E}[Z|M])=\mathbb{V}\{\mathbb{E}[Z|M]\}=\mathbb{V}[M]$,
by Eq. \ref{eq:CovVE} (and using twice that $\mathbb{E}[Z|M]=M$). Under the conditions of Theorem
\ref{thm:ZgXlwrbnd}, $M$ has a Gaussian distribution, with the distribution
of $X$ that implied by the mapping $X=m^{-1}(M).$ Given the properties
of the one-to-one mapping $m(x)$, we also have that $I_{f}(M;Z)=I_{f}(X;Z)$ (see, for example, \cite{Kraskov2004}).
Applying Lemma \ref{lem:lbndCorr} to the Gaussian, pseudo-input $M$ and the output $Z$ yields 
\begin{eqnarray*}
I_{f}(X;Z)\geq\frac{1}{2}\mathrm{\log}\left\{ \frac{\mathrm{\det}(\mathbb{V}[Z])}{\mathrm{det}(\mathbb{V}[Z]-\mathrm{Cov}(Z,M)\mathbb{V}[M]^{-1}\mathrm{Cov}(Z,M)^{\mathrm{T}})}\right\}_{\mathit{f}} \\
=\frac{1}{2}\mathrm{\log}\left\{ \frac{\mathrm{\det}(\mathbb{V}[Z])}{\mathrm{det}(\mathbb{V}[Z]-\mathbb{V}\{\mathbb{E}[Z|M]\})}\right\}_{\mathit{f}} ,
\end{eqnarray*}
where the second line again uses $\mathrm{Cov}(Z,M)=\mathbb{V}[M]=\mathbb{V}\{\mathbb{E}[Z|M]\}$. The result then follows directly because $\mathbb{V}[Z]=\mathbb{V}\{\mathbb{E}[Z|M]\}+\mathbb{E}\{\mathbb{V}[Z|M]\},$
and because we have that $\mathbb{V}[Z|M]=\mathbb{V}[Z|X]$ since $\sigma(M)=\sigma(X).$
Eq. \ref{capacitybnd} follows from Eq. \ref{eq:ZgXlwrbound} and the definition of capacity as the supremum of mutual information over the collection of allowed input distributions, $\mathcal{F}$.
\end{proof}

Notice that characterising the first and second conditional moments of the mechanism, $\mathbb{E}[Z|X]$ and $\mathbb{V}[Z|X]$, is enough (using the law of total variance) to evaluate the lower bound $\mathrm{\log}\left\{\nu(Z|X)^{-1/2}\right\}$ for a given Gaussian distribution of $m(X)$. Maximising the bound over such distributions then yields the largest lower bound on the information capacity. The approach is applicable when
experimental data have been generated under some other
input distribution, provided the first and second conditional moments are carefully estimated. 

We now discuss the relationship of our lower bounds in Eqs. \ref{eq:ZgXlwrbound} and \ref{capacitybnd} to average mean-square error, rate distortion-type arguments with a Gaussian source.

\begin{cor}
Let $d_{Z}$ be the dimension of $Z$ and suppose that the conditions of Theorem \ref{thm:ZgXlwrbnd} apply. It follows from Eq. \ref{eq:ZgXlwrbound},
scaling by the reciprocal of $d_{Z}$, that

\begin{eqnarray}
d_{Z}^{-1}I_{f}(X;Z)\geq\frac{1}{2}d_{Z}^{-1}\mathrm{\log}\{\nu_{f}(Z|X)^{-1/2}\} & = & \frac{1}{2}\mathrm{\log}\left\{ \frac{\mathrm{[\det}(\mathbb{V}[Z])]^{d_{Z}^{-1}}}{[\mathrm{det}(\mathbb{E}\{\mathbb{V}[Z|X]\})]^{d_{Z}^{-1}}}\right\}_{\mathit{f}} \nonumber \\
\geq &  & \frac{1}{2}\mathrm{\log}\left\{ \frac{\mathrm{[\det}(\mathbb{V}[Z])]^{d_{Z}^{-1}}}{d_{Z}^{-1}\mathrm{tr}(\mathbb{E}\{\mathbb{V}[Z|X]\})}\right\}_{\mathit{f}} ,\label{eq:rdtype}
\end{eqnarray}
 since $\mathbb{E}_{f}\{\mathbb{V}[Z|X]\}$ is positive definite, where $\mathit{f}$ indicates evaluation under the joint density $f(x,z)$. We have used that $\mathrm{0<[\det}(M)]^{1/m}\leq m^{-1}\mathrm{tr}(M)$
for a positive definite, $m\times m$ matrix, $M$ \cite{CoverThomasBook2006}.
\label{rdarg}
\end{cor}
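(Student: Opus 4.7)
My plan is to chain Theorem \ref{thm:ZgXlwrbnd} with the classical matrix arithmetic--geometric mean inequality $[\det(M)]^{1/m}\leq m^{-1}\mathrm{tr}(M)$, valid for any positive definite $m\times m$ matrix $M$. Starting from the bound just established, $I_{f}(X;Z)\geq\log\{\nu_{f}(Z|X)^{-1/2}\}$, I will divide both sides by the output dimension $d_{Z}$ and unpack the definition $\nu_{f}(Z|X)=\det(\mathbb{E}_{f}\{\mathbb{V}[Z|X]\})/\det(\mathbb{V}_{f}[Z])$. Pulling the factor $1/d_{Z}$ inside each determinant as an exponent is then a purely algebraic rewrite that produces the middle expression of Eq.~\ref{eq:rdtype}.

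For the second, substantive inequality I will apply the matrix AM--GM inequality to $M=\mathbb{E}_{f}\{\mathbb{V}[Z|X]\}$ with $m=d_{Z}$. The required hypotheses are in place: each realisation $\mathbb{V}[Z|X=x]$ is positive semidefinite, so its mixture $\mathbb{E}_{f}\{\mathbb{V}[Z|X]\}$ inherits this property, and Theorem \ref{thm:ZgXlwrbnd} explicitly assumes this expectation to be non-singular, hence strictly positive definite. The inequality then delivers $[\det(\mathbb{E}_{f}\{\mathbb{V}[Z|X]\})]^{1/d_{Z}}\leq d_{Z}^{-1}\mathrm{tr}(\mathbb{E}_{f}\{\mathbb{V}[Z|X]\})$. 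Taking reciprocals reverses the inequality, multiplying through by the positive scalar $[\det(\mathbb{V}_{f}[Z])]^{1/d_{Z}}$ preserves it, and applying the monotone logarithm yields the right-most bound of Eq.~\ref{eq:rdtype}.

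I do not anticipate a genuine obstacle, as the corollary is essentially a bookkeeping exercise that appends a standard matrix inequality to the content of Theorem \ref{thm:ZgXlwrbnd}. The main care required is tracking the sign flip when reciprocating, and verifying the positive definiteness needed by the AM--GM step; both points follow immediately from the setup. The only aesthetic subtlety is to display the right-hand side in the ``average MSE'' form that invites comparison with rate-distortion bounds for a Gaussian source, which is the interpretive payoff the authors flag in the paragraph preceding the corollary.
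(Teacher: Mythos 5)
Your proposal is correct and follows exactly the route the paper takes: divide the bound of Theorem \ref{thm:ZgXlwrbnd} by $d_{Z}$, rewrite $\nu_{f}(Z|X)$ via its determinant definition, and apply the arithmetic--geometric mean inequality $0<[\det(M)]^{1/m}\leq m^{-1}\mathrm{tr}(M)$ to the positive definite matrix $M=\mathbb{E}_{f}\{\mathbb{V}[Z|X]\}$ (whose non-singularity is already assumed in the theorem). The only point worth flagging is a harmless typographical slip in the paper's displayed chain, where the prefactor of the second term should make it equal to $d_{Z}^{-1}\log\{\nu_{f}(Z|X)^{-1/2}\}$; your bookkeeping is the consistent one.
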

Notice that $d_{Z}^{-1}\mathrm{tr}(\mathbb{E}\{\mathbb{V}[Z|X]\})$
is the average minimum MSEE given $X$, the average being across the scalar components of the vector
$Z$.
Eq. \ref{eq:rdtype} establishes that our lower bound, $\mathrm{\log}\{\nu_{f}(Z|X)^{-1/2}\}$,
is tighter than the lower bound based on the average minimum MSEE, $\frac{1}{2}\mathrm{\log\{\mathrm{\det}(\mathbb{V}[Z])}^{d_{Z}^{-1}}/$ $d_{Z}^{-1}\mathrm{tr}(\mathbb{E}\{\mathbb{V}[Z|X]\})\}_{\mathit{f}}$. The two lower bounds clearly coincide for the bivariate case, $d=2$. The bound based on the average minimum MSEE might, at first sight, appear to have the form that would be obtained by a rate distortion-type argument \cite[Section 4.5.2]{BergerBook} with $Z$ as the Gaussian `source'. However, this is not the case because the bound would in general be evaluated under a joint density for $(X,Z)$ different than $\mathit{f}$. (However, see Lemma \ref{thm:XgZlwrbnd} and the subsequent discussion for the case of regression on the response variable). Notice also that in our setting of Eq. \ref{picture}, we may not be able to adjust the input distribution in order to obtain a Gaussian marginal for $Z$.

Instead, one might treat $M=\mathbb{E}[Z|X]$ as the Gaussian source in a rate distortion-type argument. Consider the case $d=2$. One obtains the result $I_{f}(X;Z)\geq\frac{1}{2}\mathrm{log}(\mathbb{V}\{\mathbb{E}[Z|X]\}/\mathbb{E}\{\mathbb{V}[Z|X]\})_{f}$,
where the numerator is the variance of the source and the denominator
is the expected square-error distortion between the source and its
estimate, here $Z$. The right-hand side of this inequality is strictly
less than our lower bound, $\mathrm{\log}\{\nu_{f}(Z|X)^{-1/2}\}=\frac{1}{2}\mathrm{log}(\mathbb{V}[Z]/\mathbb{E}\{\mathbb{V}[Z|X]\})_{f},$
since $\mathbb{V}[Z]-\mathbb{V}\{\mathbb{E}[Z|X]\}=\mathbb{E}\{\mathbb{V}[Z|X]\}>0$.
An analogous argument applies to the case $d>2$, since $\mathrm{\det}(\mathbb{V}[Z]) > \mathrm{\det}($ $\mathbb{V}\{\mathbb{E}[Z|X]\})$. (For the case where $X$ itself is Gaussian, see Lemma \ref{thm:XgZlwrbnd}).
We conclude that
our lower bounds in Eqs. \ref{eq:ZgXlwrbound} and \ref{capacitybnd} are tighter than lower bounds derived using average mean-square error rate distortion-type arguments with a Gaussian source.

\section{\label{depreg} Regression on the response variable}

We have so far considered lower bounds on mutual information that rely on the error in estimation of the response variable $Z$, using the conditional mean of $Z$ given $X$. In this section we instead consider lower bounds that utilise the regression of $X$ on the response variable, $Z$. 
Using the novel proof strategy adopted for Theorem \ref{thm:ZgXlwrbnd}, we can show that the lower bound $\mathrm{\log}\{\nu_{f}(\tilde{X}|Z)^{-1/2}\}$ improves upon the one based on the Pearson correlation, $\mathrm{log\{[1-Corr_{\mathit{f}}^{2}(\mathit{\tilde{X},Z})]^{-1/2}\}}$, where $\tilde{X}$ is the result of transforming the input to have a Gaussian marginal distribution. We can thus always (weakly) improve upon the lower bound for the bivariate case given by Lemma \ref{lem:lbndCorr}. Intuitively, the improvement arises because $\nu_{f}(\tilde{X}|Z)^{-1}$ captures dependence from non-linearity in the conditional mean, $\mathbb{E}[\tilde{X}|Z]$, whereas the Pearson correlation does not.

\begin{theorem}
\label{thm:CfMS}Let $(X,Z)$ be a random vector in $\mathbb{R^{\mathrm{\mathit{2}}}}$
with joint density $f(x,z)$ with respect to Lebesgue measure. Suppose
there exists a one-to-one mapping $s:X\rightarrow\tilde{X}$ (whose
domain is an open set and a support of $X$) such that $\tilde{X}$
has a Gaussian density. We assume that the mapping $s$ is continuously
differentiable with derivative that is everywhere non-zero, and that
$\nu_{f}(\tilde{X}|Z)^{-1/2}$ exists. Then\textup{ 
\begin{equation}
I_{f}(X;Z)\geq\mathrm{\log}\{\nu_{f}(\tilde{X}|Z)^{-1/2}\}\geq\mathrm{log}\{[1-\mathrm{Corr_{\mathit{f}}^{2}}(\tilde{X},Z)]^{-1/2}\},\label{eq:CfMS}
\end{equation}
}where the third term is the lower bound on $I_{f}(\tilde{X};Z)=I_{f}(X;Z)$
given by Lemma \ref{lem:lbndCorr} in the case $d=2$. We assume
$\mathrm{Corr}_{\mathit{f}}(\mathit{\tilde{X}},Z)$ is well defined and less than $1$. Subscript $f$ indicates evaluation under the joint distribution implied by $f(x,z)$.
\end{theorem}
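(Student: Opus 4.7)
The plan is to combine invariance of mutual information under $s$ with the pseudo-input/output strategy used for Theorem \ref{thm:ZgXlwrbnd}. Since $s$ is continuously differentiable with non-vanishing derivative and one-to-one on an open support of $X$, the standard change-of-variables argument gives $I_{f}(X;Z) = I_{f}(\tilde{X};Z)$, so it suffices to lower bound $I_{f}(\tilde{X};Z)$.

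For the first inequality in Eq. \ref{eq:CfMS}, I would introduce the pseudo-output $M = \mathbb{E}[\tilde{X}|Z]$, a measurable function of $Z$. The data processing inequality gives $I_{f}(\tilde{X};Z) \geq I_{f}(\tilde{X};M)$. Since $\tilde{X}$ has a Gaussian marginal, applying Lemma \ref{lem:lbndCorr} to $(\tilde{X},M)$ yields
\begin{equation*}
I_{f}(\tilde{X};M) \geq \tfrac{1}{2}\log\left\{\frac{\mathbb{V}[\tilde{X}]}{\mathbb{V}[\tilde{X}] - \mathrm{Cov}(\tilde{X},M)^{2}/\mathbb{V}[M]}\right\}_{\mathit{f}}.
\end{equation*}
By Eq. \ref{eq:CovVE}, $\mathrm{Cov}(\tilde{X},M) = \mathrm{Cov}(\tilde{X},\mathbb{E}[\tilde{X}|Z]) = \mathbb{V}\{\mathbb{E}[\tilde{X}|Z]\} = \mathbb{V}[M]$, so $\mathrm{Cov}(\tilde{X},M)^{2}/\mathbb{V}[M] = \mathbb{V}[M]$ and the denominator collapses to $\mathbb{V}[\tilde{X}] - \mathbb{V}[M] = \mathbb{E}\{\mathbb{V}[\tilde{X}|Z]\}$ by the law of total variance. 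Substitution gives $I_{f}(\tilde{X};M) \geq \log\{\nu_{f}(\tilde{X}|Z)^{-1/2}\}$, which combined with the preceding steps yields the first inequality.

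For the second inequality, I would use that $\mathbb{E}[\tilde{X}|Z]$ is the minimum mean-square predictor of $\tilde{X}$ based on $Z$ and therefore cannot have a larger mean-square error than the best linear predictor in $Z$, so that $\mathbb{E}\{\mathbb{V}[\tilde{X}|Z]\} \leq \mathbb{V}[\tilde{X}]\{1 - \mathrm{Corr}_{\mathit{f}}^{2}(\tilde{X},Z)\}$. Dividing by $\mathbb{V}[\tilde{X}]$ gives $\nu_{f}(\tilde{X}|Z) \leq 1 - \mathrm{Corr}_{\mathit{f}}^{2}(\tilde{X},Z)$; passing to reciprocal square roots and taking $\log$ then delivers the required ordering.

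The main obstacle is justifying the invocation of Lemma \ref{lem:lbndCorr} for $(\tilde{X},M)$, since $M$ is a deterministic function of $Z$ and its marginal need not be absolutely continuous. A clean workaround bypasses Lemma \ref{lem:lbndCorr} and reproduces the bound at the level of differential entropy: $h_{f}(\tilde{X}) = \tfrac{1}{2}\log\{2\pi e\,\mathbb{V}[\tilde{X}]\}$ since $\tilde{X}$ is Gaussian; the Gaussian maximum-entropy bound gives $h_{f}(\tilde{X}|Z=z) \leq \tfrac{1}{2}\log\{2\pi e\,\mathbb{V}_{f}[\tilde{X}|Z=z]\}$ for a.e. $z$; averaging over $Z$ and invoking Jensen's inequality for the concave logarithm yields $h_{f}(\tilde{X}|Z) \leq \tfrac{1}{2}\log\{2\pi e\,\mathbb{E}_{f}\{\mathbb{V}[\tilde{X}|Z]\}\}$. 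Subtracting recovers $I_{f}(\tilde{X};Z) \geq \log\{\nu_{f}(\tilde{X}|Z)^{-1/2}\}$ without requiring a density for $M$.
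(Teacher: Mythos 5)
Your argument is correct and its main line --- transform to the Gaussian $\tilde{X}$, introduce the pseudo-output $M=\mathbb{E}[\tilde{X}|Z]$, apply the data processing inequality and then Lemma \ref{lem:lbndCorr} together with the covariance identity of Eq. \ref{eq:CovVE} --- is exactly the paper's proof of the first inequality in Eq. \ref{eq:CfMS}. You depart from the paper in two places, both to good effect. First, for the second inequality the paper invokes R\'{e}nyi's characterisation $\mathbb{V}\{\mathbb{E}[\tilde{X}|Z]\}/\mathbb{V}[\tilde{X}]=\sup_{t}\mathrm{Corr}^{2}(t(Z),\tilde{X})$ from \cite{Renyi1959} and then specialises to $t(Z)=Z$ (Eq. \ref{eq:biv}); your observation that the conditional mean cannot have larger mean-square error than the best linear predictor proves the same inequality $\nu_{f}(\tilde{X}|Z)\leq 1-\mathrm{Corr}_{f}^{2}(\tilde{X},Z)$ more elementarily, though it forgoes the stronger conclusion the paper draws from the supremum form, namely that no measurable transformation $t(Z)$ can improve on the bound $\log\{\nu_{f}(\tilde{X}|Z)^{-1/2}\}$. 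Second, you correctly flag that Lemma \ref{lem:lbndCorr} is stated for pairs possessing a joint Lebesgue density, whereas $(\tilde{X},M)$ need not have one (and $\mathbb{V}[M]$ can even vanish in degenerate cases); the paper passes over this point silently. Your entropy-based workaround --- the Gaussian maximum-entropy bound on $h(\tilde{X}|Z=z)$ followed by Jensen's inequality --- is valid and is in fact the very argument the paper uses to prove Lemma \ref{thm:XgZlwrbnd}, so it both closes this gap and unifies the two results.
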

As we show in the proof below, the lower
bound $\mathrm{\log}\{\nu_{f}(\tilde{X}|Z)^{-1/2}\}$ can be understood
as the result of first transforming $Z$ to the `pseudo-output' $t(Z)=\mathbb{E}[\tilde{X}|Z],$
and then basing the bound on $\mathrm{Corr_{\mathit{f}}^{2}}(\tilde{X},t(Z))$,
using Lemma \ref{lem:lbndCorr}. We then establish
that using another (measurable) transformation,
$t(Z)$, cannot yield a greater bound (given some choice of the Gaussian
variable $\tilde{X}$). This includes, in particular, the lower bound
based on the squared correlation of $\tilde{X}$ and $Z$ itself.
Notice that we cannot construct a lower bound using the maximal correlation
of $(X,Z)$ \cite{Gebelein1941,Renyi1959}, because the implied transformation
of $X$ need not result in the Gaussian distribution needed to apply
Lemma \ref{lem:lbndCorr}. 
\begin{proof}
Given the properties of the one-to-one mapping $s(x)$, we have that
$I_{f}(X;Z)=I_{f}(\tilde{X};Z)$.
Consider the mechanism $\tilde{X}\rightarrow X \rightarrow Z\rightarrow\mathbb{E}[\tilde{X}|Z]$,
in which we first transform $\tilde{X}$ to $X$ and then apply $X$
to the `channel' $f(Z|X).$ The pseudo-output here is $\mathbb{E}[\tilde{X}|Z]$. By the data processing inequality, $I_{f}(\tilde{X};Z)\geq I_{f}(\tilde{X};\mathbb{E}[\tilde{X}|Z]).$
Applying Lemma \ref{lem:lbndCorr} to the Gaussian pseudo-input $\tilde{X}$ and the pseudo-output $\mathbb{E}[\tilde{X}|Z]$ yields 
\[
I_{f}(\tilde{X};\mathbb{E}[\tilde{X}|Z])\geq\frac{1}{2}\mathrm{\mathrm{\log}\{[1-\mathrm{Corr^{2}}(\tilde{X},\mathbb{E}[\tilde{X}|Z])]^{-1/2}}\}.
\]
By Eq. \ref{eq:CovVE}, $\mathrm{Cov}(\tilde{X},\mathbb{E}[\tilde{X}|Z])=\mathbb{V}\{\mathbb{E}[\tilde{X}|Z]\}.$
Hence $\mathrm{Corr^{2}}(\tilde{X},\mathbb{E}[\tilde{X}|Z])=\mathbb{V}\{\mathbb{E}[\tilde{X}|Z]\}$ $/\mathbb{V}[\tilde{X}]$,
and 
\[
I_{f}(\tilde{X};\mathbb{E}[\tilde{X}|Z])\geq\frac{1}{2}\mathrm{\mathrm{\log}\left\{ \mathbb{E\mathit{\left\{ \mathbb{V}\left[\tilde{X}|Z\right]\right\} }}/\mathbb{V}\left\{\mathit{\tilde{X}}\right\}\right\} }=\mathrm{\log}\{\nu_{f}(\tilde{X}|Z)^{-1/2}\},
\]
since $\mathbb{V}[\tilde{X}]=\mathbb{V}\{\mathbb{E}[\tilde{X}|Z]\}+\mathbb{E}\{\mathbb{V}[\tilde{X}|Z]\}$
by the law of total variance. This establishes the first inequality
in Eq. \ref{eq:CfMS} and, importantly, does so in a way that enables
us to establish the second. It is a direct consequence of \cite{Renyi1959}
that $\mathbb{V}\{\mathbb{E}[\tilde{X}|Z]\}/\mathbb{V}[\tilde{X}]$ 
is equal to $\mathrm{sup_{\mathit{t}}\mathrm{Corr}^{2}(\mathit{t\left(Z\right),\tilde{X}})}$,
where the supremum is over all Borel measurable functions $t$ such
that $\mathrm{Corr}(\mathit{t\left(Z\right),\tilde{X}})$ is well
defined. It follows that 
\begin{equation}
1-\nu_{f}(\tilde{X}|Z)=\mathbb{V}\{\mathbb{E}[\tilde{X}|Z]\}/\mathbb{V}[\tilde{X}]\geq\mathrm{Corr}^{2}(\mathit{t(Z),\tilde{X}})\geq\mathrm{Corr}^{2}(\mathit{Z,\tilde{X}}), \label{eq:biv}
\end{equation}
for all $t(\cdot)$, which implies the second inequality in Eq. \ref{eq:CfMS}.
\end{proof}

An analogue of Theorem \ref{thm:ZgXlwrbnd} when the conditioning is on the response variable $Z$ is given by the following Lemma. The proof
is straightforward. A related lower bound is given
without proof in the frequency domain by \cite{Rieke1999}, the bound being on the mutual information
rate in continuous-time.
\begin{lem}
\label{thm:XgZlwrbnd}Let $(X,Z)$ be a random vector in $\mathbb{R^{\mathrm{\mathit{d}}}}$,
$d\geq2,$ with joint density with respect to Lebesgue measure, $f(x,z)$.
Suppose there exists a one-to-one mapping $s:X\rightarrow\tilde{X}$
(whose domain is an open set and a support of $X$) such that $\tilde{X}$
has a Gaussian density. We assume that the mapping $s$ is continuously
differentiable with a Jacobian that is everywhere non-zero, and that
$\nu_{f}(\tilde{X}|Z)^{-1/2}$ exists. Then, scaling by the reciprocal 
of the dimension of $X$,
\begin{equation}
d_{X}^{-1}I_{f}(X;Z)\geq\frac{1}{2}d_{X}^{-1}\mathrm{\log}\{\nu_{f}(\tilde{X}|Z)^{-1/2}\}\geq\frac{1}{2}\mathrm{\log}\left\{ \frac{\mathrm{[\det}(\mathbb{V}[\tilde{X}])]^{d_{X}^{-1}}}{d_{X}^{-1}\mathrm{tr}(\mathbb{E}\{\mathbb{V}[\tilde{X}|Z]\})}\right\}_{\mathit{f}} ,\label{eq:XgZlwrbound}
\end{equation}
and
\begin{equation}
C_{Z|X}\geq I_{f}(X;Z) \geq \mathrm{\log}\left\{\nu_{f}(\tilde{X}|Z)^{-1/2}\right\}, \label{capacitybnd2}
\end{equation}
provided $f(x)$ is the density of an allowed input distribution, that is a distribution in $\mathcal{F}$.
\end{lem}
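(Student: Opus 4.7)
The plan is to follow the strategy of Theorem \ref{thm:ZgXlwrbnd} and Theorem \ref{thm:CfMS} verbatim, generalising the latter to the multivariate case $d\geq 2$ and then appending the trace--determinant step used in Corollary \ref{rdarg}. The differences from Theorem \ref{thm:ZgXlwrbnd} are that the role of the Gaussian ``pseudo-input'' is now played by $\tilde{X}$ itself (already Gaussian by hypothesis, rather than obtained as $m(X)$), and the ``pseudo-output'' is the Borel function $\mathbb{E}[\tilde{X}|Z]$ of $Z$.

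First, because $s$ is a one-to-one, continuously differentiable mapping with everywhere non-zero Jacobian, mutual information is invariant under the reparametrisation, so $I_f(X;Z)=I_f(\tilde{X};Z)$. Next, since $\mathbb{E}[\tilde{X}|Z]$ is a deterministic function of $Z$, the data processing inequality gives $I_f(\tilde{X};Z)\geq I_f(\tilde{X};\mathbb{E}[\tilde{X}|Z])$. Now apply Lemma \ref{lem:lbndCorr} to the pair $(\tilde{X},\mathbb{E}[\tilde{X}|Z])$, using $\tilde{X}$ as the Gaussian marginal required by that Lemma. This yields a bound whose denominator contains
\[
\mathbb{V}[\tilde{X}]-\mathrm{Cov}(\tilde{X},\mathbb{E}[\tilde{X}|Z])\,\mathbb{V}\{\mathbb{E}[\tilde{X}|Z]\}^{-1}\,\mathrm{Cov}(\tilde{X},\mathbb{E}[\tilde{X}|Z])^{\mathrm{T}}.
\]
By Eq.\ \ref{eq:CovVE}, $\mathrm{Cov}(\tilde{X},\mathbb{E}[\tilde{X}|Z])=\mathbb{V}\{\mathbb{E}[\tilde{X}|Z]\}$, so this collapses to $\mathbb{V}[\tilde{X}]-\mathbb{V}\{\mathbb{E}[\tilde{X}|Z]\}=\mathbb{E}\{\mathbb{V}[\tilde{X}|Z]\}$ by the law of total variance. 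Substituting gives $I_f(X;Z)\geq \tfrac{1}{2}\log\{\det(\mathbb{V}[\tilde{X}])/\det(\mathbb{E}\{\mathbb{V}[\tilde{X}|Z]\})\}_f = \log\{\nu_f(\tilde{X}|Z)^{-1/2}\}$, which is the first stated inequality (after scaling by $d_X^{-1}$).

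For the second inequality, apply the positive-definite-matrix bound $0<[\det(M)]^{1/m}\leq m^{-1}\mathrm{tr}(M)$ with $M=\mathbb{E}\{\mathbb{V}[\tilde{X}|Z]\}$ (positive definite because $\nu_f(\tilde{X}|Z)^{-1/2}$ is assumed to exist) and $m=d_X$. Because this quantity sits in the denominator of the logarithm's argument, the inequality flips in the desired direction and yields the trace-based lower bound in Eq.\ \ref{eq:XgZlwrbound}, exactly as in Corollary \ref{rdarg}. The capacity bound in Eq.\ \ref{capacitybnd2} is then immediate: since $f(x)\in\mathcal{F}$ by hypothesis, $C_{Z|X}=\sup_{\mathcal{F}}I(X;Z)\geq I_f(X;Z)\geq \log\{\nu_f(\tilde{X}|Z)^{-1/2}\}$.

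I do not expect a real obstacle here: every ingredient (reparametrisation invariance, data processing, Lemma \ref{lem:lbndCorr} in the multivariate form, the identity \ref{eq:CovVE}, and the trace--determinant inequality) has already been deployed in the excerpt. The only care required is verifying the non-singularity of $\mathbb{V}\{\mathbb{E}[\tilde{X}|Z]\}$ so that the matrix inversion inside Lemma \ref{lem:lbndCorr} is legitimate; this however follows from the assumption that $\nu_f(\tilde{X}|Z)^{-1/2}$ exists together with $\mathbb{V}[\tilde{X}]=\mathbb{V}\{\mathbb{E}[\tilde{X}|Z]\}+\mathbb{E}\{\mathbb{V}[\tilde{X}|Z]\}$, both summands being positive semidefinite with strictly positive determinants under the stated hypotheses.
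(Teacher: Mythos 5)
Your argument reaches the right conclusion but by a genuinely different route from the paper's. The paper proves the first inequality in Eq.~\ref{eq:XgZlwrbound} directly from the maximum-entropy property of the Gaussian, $h(\tilde{X}|Z=z)\leq\tfrac{1}{2}\log[(2\pi e)^{d_X}\det(\mathbb{V}[\tilde{X}|Z=z])]$, followed by Jensen's inequality applied to the concave function $\log\det(\cdot)$ to pass from $\mathbb{V}[\tilde{X}|Z=z]$ to $\mathbb{E}\{\mathbb{V}[\tilde{X}|Z]\}$, and then subtracts the result from the Gaussian entropy $h(\tilde{X})$. You instead generalise the route of Theorem~\ref{thm:CfMS}: data processing to the pseudo-output $\mathbb{E}[\tilde{X}|Z]$, then Lemma~\ref{lem:lbndCorr} combined with the identity in Eq.~\ref{eq:CovVE}. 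The trace inequality and the capacity statement are handled identically in both.

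The difference matters because your route inherits two hypotheses of Lemma~\ref{lem:lbndCorr} that are not guaranteed here and that the paper's entropy argument never needs. First, Lemma~\ref{lem:lbndCorr} is stated (and proved) for a pair possessing a joint density with respect to Lebesgue measure; the pair $(\tilde{X},\mathbb{E}[\tilde{X}|Z])$ need not have one --- if $d_Z<d_X$ the pseudo-output is concentrated on a lower-dimensional image of $z\mapsto\mathbb{E}[\tilde{X}|Z=z]$, and even when $d_Z\geq d_X$ absolute continuity of this image measure is an unverified extra regularity condition. Second, the Schur-complement form you invoke requires $\mathbb{V}\{\mathbb{E}[\tilde{X}|Z]\}$ to be invertible, and your justification of this is incorrect: existence of $\nu_{f}(\tilde{X}|Z)^{-1/2}$ only forces $\det(\mathbb{E}\{\mathbb{V}[\tilde{X}|Z]\})>0$ and $\det(\mathbb{V}[\tilde{X}])>0$; it says nothing about $\mathbb{V}\{\mathbb{E}[\tilde{X}|Z]\}$, which is singular whenever some linear combination of the components of $\tilde{X}$ is mean-independent of $Z$ (and is the zero matrix when $\tilde{X}$ and $Z$ are independent, a case in which $\nu=1$ and $\nu^{-1/2}$ exists perfectly well). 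Both defects are repairable --- a generalised inverse still yields $VV^{+}V=V$, and the Gaussian-comparison inequality underlying Lemma~\ref{lem:lbndCorr} holds more generally --- but as written your proof needs these patches, whereas the paper's max-entropy and Jensen argument applies verbatim under the stated hypotheses.
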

\begin{proof}
A concise proof of the first inequality in Eq. \ref{eq:XgZlwrbound} uses that $h(\tilde{X}|Z=z)\leq\frac{1}{2}\mathrm{log}[(2\pi e)^{d_{X}}\mathrm{det}(\mathbb{V}[\tilde{X}|Z=z])]$,
which follows from the maximum entropy property of the multivariate
Gaussian distribution for a given covariance matrix. Then $h(\tilde{X}|Z)\leq\frac{1}{2}\mathrm{log}[(2\pi e)^{d_{X}}\mathrm{det}(\mathbb{E}\{\mathbb{V}[\tilde{X}|Z=z])\}]$,
where we have applied Jensen's
inequality, using the concavity of the function $\mathrm{log}\{\mathrm{det}(\Sigma)\}$
for symmetric, non-negative definite matrices, $\Sigma$ \cite{Cover1988}.
The result follows since $I_{f}(X;Z)=I_{f}(\tilde{X};Z)=h(\tilde{X})-h(\tilde{X}|Z)$
and the entropy of the Gaussian $\tilde{X}$ is given by $h(\tilde{X})=\frac{1}{2}\mathrm{log}[(2\pi e)^{d_{X}}\mathrm{det}(\mathbb{V}[\tilde{X}])]$.
The second inequality in Eq. \ref{eq:XgZlwrbound} follows because we
take the covariance matrix $E\{\mathbb{V}[\tilde{X}|Z]\}$ to be positive
definite and $\mathrm{0<[\det}(M)]^{1/m}\leq m^{-1}\mathrm{tr}(M)$
for a positive definite, $m\times m$ matrix, $M$. 
\end{proof}
\noindent The existence of the mapping $s(x)$ in Lemma \ref{thm:XgZlwrbnd} is
not unduly restrictive. For example, when $d=2$ and the input $X$ has a strictly
increasing distribution function taking values in $(0,1)$, then $F_{X}(X)$
is uniformly distributed and can be invertibly transformed to a Gaussian
random variable. Similar comments apply when $d>2$, using the multivariate transformation
of \cite{Rosenblatt1952} to independent uniform r.v.'s on $(0,1)$.

Rate distortion-type arguments using average mean-square error distortion and $\tilde{X}$ as the Gaussian source cannot establish Eq. \ref{eq:XgZlwrbound} for $d_X > 1$. Such arguments \cite[Section 4.5.2]{BergerBook} show only that $\frac{1}{2}\mathrm{\log\{\mathrm{\det}(\mathbb{V}[\tilde{X}])}^{d_{X}^{-1}}/d_{X}^{-1}$ $\mathrm{tr}(\mathbb{E}\{\mathbb{V}[\tilde{X}|Z]\})\}_{f}$
is a lower bound for $d_{X}^{-1}I_{f}(X;Z)$ under the conditions of
Lemma \ref{thm:XgZlwrbnd}. As we have shown, our lower bound in Eq. \ref{eq:XgZlwrbound}, $\mathrm{\log}\{\nu_{f}(\tilde{X}|Z)^{-1/2}\}$, is tighter than this one derived using average mean-square error rate distortion arguments with Gaussian source. 

\section{Applications}
The lower bounds on mutual information and capacity derived in previous sections will prove useful in at least two types of application: analysing the dependence between input and response vectors using empirical data; and analysing the information capacity of signaling and communication mechanisms for which physical models are available. An illustration of the first type of application using simulated data and further discussion are given immediately below. An existing example of the second type is given in \cite{Mitra2001} which examines the  information capacity of optical fiber communication by employing a lower bound based on the Pearson correlation. Indeed, physical models of a communication mechanism can often be solved for their moments when distributional results are not feasible. For example, models of biomolecular signaling mechanisms are stochastic kinetic models of biochemical reaction networks \cite{BowsherAOS} that can be solved approximately using system-size expansions of the master equation \cite{GrimaEMREJCPpaper2010}. Such expansions can be used to provide fast, computational evaluation of  $\mathbb{E}[Z|X]$ and $\mathbb{V}[Z|X]$ for many (rate) parameter vectors describing the network \cite{iNApaper}. Theorem \ref{thm:ZgXlwrbnd}, Eq. \ref{capacitybnd} can then be used to approximate the capacity of the signaling mechanism and explore its parameter sensitivity. 
\subsection{Lower bound estimation\label{MCarlo}}
Estimation of mutual information rapidly becomes problematic as the dimension, $d$, of $(X,Z)$ grows. A distinct advantage of the lower bounds in Eqs. \ref{eq:ZgXlwrbound} and \ref{eq:XgZlwrbound} based on $\mathrm{log}(\nu^{-1/2})$
is that they are amenable to inference by using nonparametric
regression to estimate the relevant conditional mean. Nonparametric regression and covariance matrix estimation methods for higher dimensions \cite{Friedman1991,Bickel2008} should break down more slowly than mutual information estimation methods as $d$ grows. This is valuable for applications, including
those in systems biology where multiple inputs and outputs
often need to be considered. Estimation of our lower bounds should therefore prove useful for analysing the dependence between input and response in higher dimensions.

The following simulation study demonstrates that use of the lower bounds can substantially improve inference about mutual information when the sample size becomes limited for a given value of $d$. Here we use $d=2$. Inferential procedures for the $d>2$ setting lie beyond the scope of the present paper and will be explored in future work.  The $k$-nearest neighbour point estimator \cite{Kraskov2004}, $\hat{I}_{knn}$, is widely regarded as the leading method for estimation of mutual information using continuously distributed data.  We employ a \emph{composite estimator} defined as the maximum of $\hat{I}_{knn}$ and the lower limit of our bootstrap confidence interval for $\mathrm{log}(\nu^{-1/2})$.
This composite
estimator makes use of our lower bounds to correct erroneous point estimates. We find that the lower bounds are able to provide substantial improvements to the downward bias and root mean square error
(rmse) we report for the nearest-neighbour estimator. 

We assume that we are given data $\left\{ (X_{i},Z_{i});i=1,...,N\right\} $
for independent and identically distributed units and that the distribution of the input $X$ is known (see the discussion following Eq. \ref{picture}). We obtain confidence
intervals, for example, for the lower bound in Eq. \ref{eq:CfMS}
based on $\nu(\tilde{X}|Z)$ as follows. (1)
Obtain fitted values, $\hat{X}_{i}$, for the transformed, Gaussian
input $\widetilde{X}$ by nonparametric estimation of $\mathbb{E}[\tilde{X}|Z]$
using a smoothing spline; (2) Obtain the estimate $1-\hat{\nu}_{{\scriptscriptstyle \tilde{X}|Z}}$
as the ratio of the sample variance of $\hat{X}_{i}$ to the known variance of $\tilde{X}_{i}$ (see Eq. \ref{eq:biv}); (3)
Obtain bias-corrected, accelerated ($\mathrm{BC_{a}}$) bootstrap
confidence intervals \cite{Efron1994} using the estimator $\mathrm{\log}(\hat{\nu}_{{\scriptscriptstyle \tilde{X}|Z}}^{-1/2})$.
Details of the proposed procedure are given in the Appendix. 

\begin{figure}[t!]
\begin{centering}
\includegraphics[width=1\textwidth]{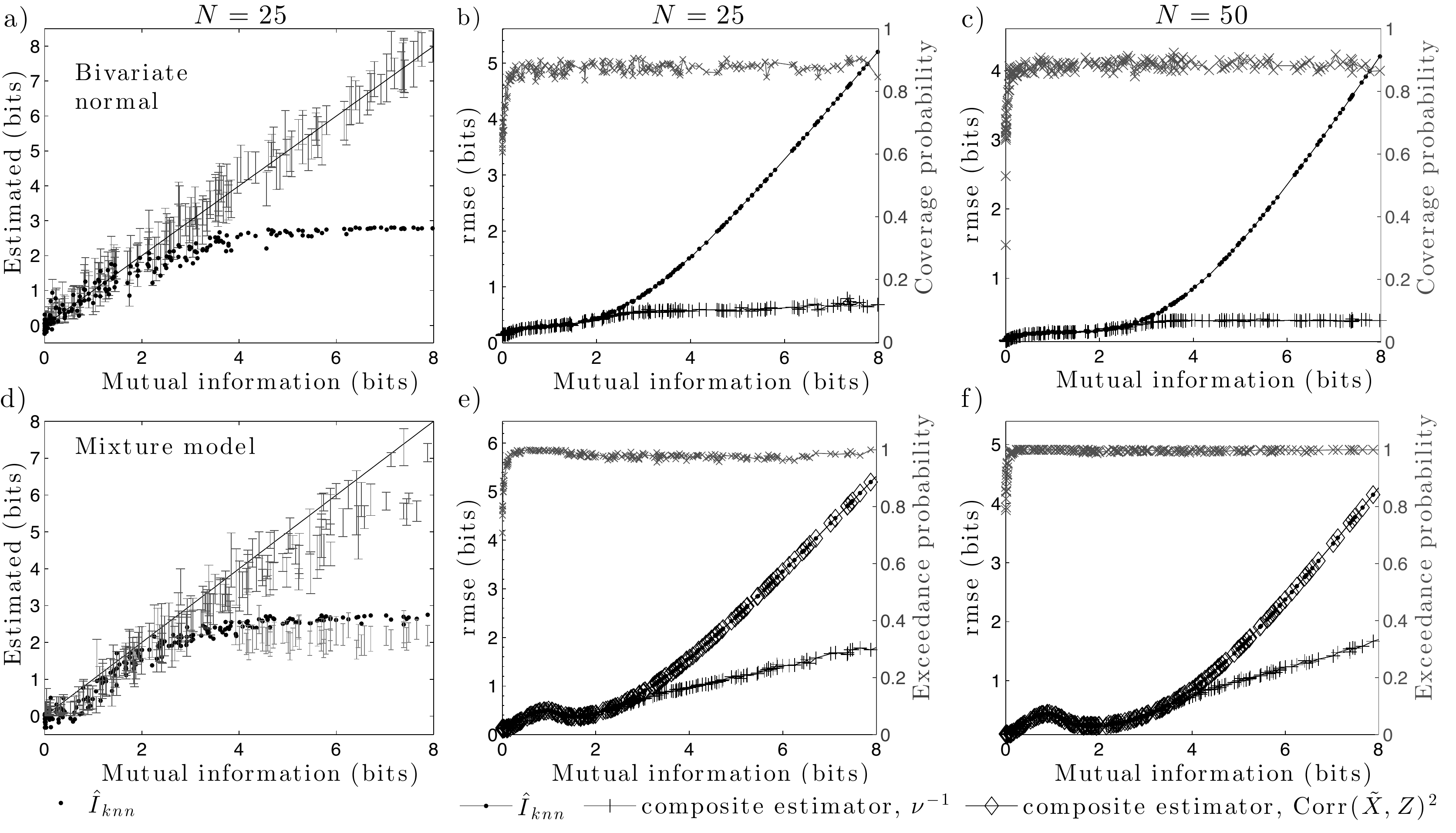}

\par\end{centering}

\caption{ \textbf{Bootstrap confidence intervals of the lower bounds $\mathrm{log}(\nu^{-1/2})$
can substantially improve inference about mutual information through use of a composite estimator. } {{Simulation
results are shown for the bivariate normal (a--c), and for
a mixture model incorporating the same linear regression
model but with a mixture of normals distribution for $X$ (d--f). $N$ is the number of i.i.d. observations.
a) and d): for $N=25$, the $\mathrm{BC_{a}}$, nominally
$90\%$ confidence intervals for our lower bounds in Eqs. \ref{eq:ZgXlwrbound}
and \ref{eq:CfMS} resp., together with the $k$-nearest
neighbour estimates, $\hat{I}_{knn}$, with $k=3$.
(Analogous confidence intervals for the lower bounds based on $\mathrm{Corr}(\tilde{{X}},Z)^{2}$
are shown in d) for the mixture model when $I(X;Z)>4$ bits).
Parameter vectors for each model were sampled independently from their
parameter spaces. A single data set is generated for each
parameter vector in a) and d). Coverage probability
(grey crosses, b) and c)) gives frequency with
which the $\mathrm{BC_{a}}$ interval covers the true $I(X;Z)$; Exceedance
probability (grey crosses, e) and f)) gives frequency
with which $I(X;Z)$ exceeds the lower limit of the $\mathrm{BC_{a}}$
interval (nominally $>0.9$). Root mean square errors (rmse) are plotted
for $\hat{I}_{knn}$ (filled circles), and the composite
estimators (see text) based on $\nu_{{\scriptscriptstyle {Z|X}}}^{-1}$
or $\nu_{{\scriptscriptstyle \tilde{X}|Z}}^{-1}$ (black crosses)
and $\mathrm{Corr}(\tilde{X},Z)^{2}$ (diamonds). Results based
on $500$ Monte Carlo replications.}} }

\label{fig1} \vspace{-0.25in}
 
\end{figure}

Figures 1 and 2 present simulation results for a range of true
values of the mutual information and for two types of data generation
mechanism: a bivariate normal distribution and a mixture model. In
both, $Z=\alpha+\beta X+\varepsilon$, with $\varepsilon$ normally
distributed conditional on $X$ (with constant variance independent
of $X$). In the first, $X$ has a marginal normal distribution (under the data generating density, $f$),
hence $(X,Z)$ has a bivariate normal distribution, $X=\tilde{X}$, and the bounds $\mathrm{log}(\nu^{-1/2})$ in Eqs. \ref{eq:ZgXlwrbound}  and \ref{eq:CfMS}
hold with equality. In the second, $X$
is specified to be an equally-weighted mixture of 2 normals, and we obtain the pseudo-input $\tilde{X}$
by first transforming to uniformity using the probability integral transform
and then transforming to normality. We adopt the second specification  because $\mathbb{E}[\tilde{X}|Z]$
becomes non-linear (and sigmoidal), but the true value of $I(X;Z)$ is
still known with precision through the use of a Monte Carlo average for $h(Z)$ (see Appendix). Details of the parameterisations of the models used are also given in the Appendix.


\begin{figure}[tbh]
\includegraphics[width=1\textwidth]{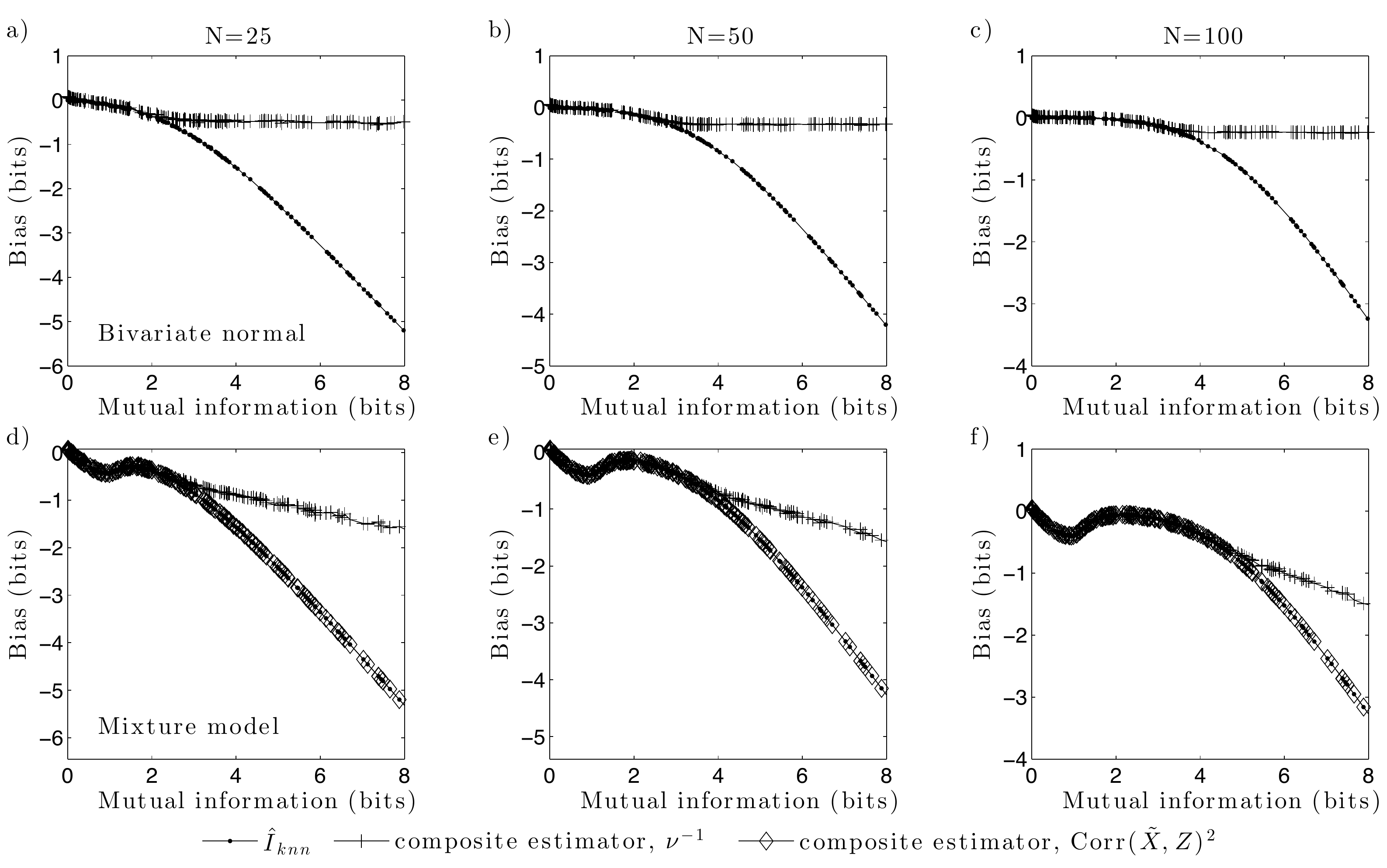}

\caption{\textbf{The lower bounds $\mathrm{log}(\nu^{-1/2})$
can substantially reduce bias through use of a composite estimator}. The lower bounds based on the Peason correlation do not reduce estimation bias for the mixture model with non-linear conditional mean. Biases are plotted for the $k$-nearest neighbour estimates, $\hat{I}_{knn}$,
with $k=3$ (filled circles), and for the composite estimators (see
text) based on $\nu_{{\scriptscriptstyle Z|X}}^{-1}$ or $\nu_{{\scriptscriptstyle \tilde{X}|Z}}^{-1}$
(bivariate normal and mixture models respectively; crosses) and $\mathrm{Corr}(X,Z)^{2}$
(diamonds). Parameter vectors for each model were sampled independently
from their parameter spaces. Results are shown for the
bivariate normal (a--c), and for the mixture model (d--f). $N$ is the number of \emph{i.i.d.} observations in each
data set. All results based on $500$ Monte Carlo replications. }
\end{figure}

Panels a) and d) of Fig. 1 show our $\mathrm{BC_{a}}$ confidence
intervals for the lower bounds based on Eq. \ref{eq:ZgXlwrbound}
and \ref{eq:CfMS} respectively, together with the point estimates
$\hat{I}_{knn}$, for independently generated data sets corresponding
to different true values of $I(X;Z)$ and for sample size $N=25$. In \cite{Kraskov2004},
the authors recommend in practice to use values of $k$ between 2 and 4. We therefore
calculate $\hat{I}_{knn}$ using $k=3$ nearest neighbours ($\hat{I}_{knn}=I^{(2)}(X,Z;k=3)$ in \cite{Kraskov2004}). The poor
performance of $\hat{I}_{knn}$ with this sample size is evident for
both models, particularly for mutual information in excess of 3 bits,
where substantial, growing bias and rmse are evident (see Fig. 2 for  plots of bias).
Higher values of $k$ result in worse bias and rmse of $\hat{I}_{knn}$ (not shown). The remaining panels of Fig. 1 depict,
for sample sizes $N=25$ and $N=50$, various properties under repeated
sampling: the frequency with which the $\mathrm{BC_{a}}$ interval
for the lower bound covers {[}b) and c){]} and has a lower limit exceeded
by {[}e) and f){]} the true mutual information; the rmse of $\hat{I}_{knn}$;
and the rmse of our composite estimator, given by the maximum of $\hat{I}_{knn}$
and the lower limit of the $\mathrm{BC_{a}}$ interval. For both sample sizes, the nonparametric confidence
intervals perform well under repeated sampling and provide substantial
reductions in bias and rmse when comparing $\hat{I}_{knn}$ to the composite
estimator (see also Fig. 2). Finally, in the mixture model where $\mathbb{E}[\tilde{X}|Z]$
is non-linear in $Z$, the lower bound based on $\mathrm{Corr}(\tilde{X},Z)^{2}$
performs considerably worse than that based on ${\nu}_{{\scriptscriptstyle \tilde{X}|Z}}$, as shown in panels d) to f) of Figs. 1 and 2. The corresponding $\mathrm{BC_{a}}$ intervals lie well below those based on ${\nu}_{{\scriptscriptstyle \tilde{X}|Z}}$ and have lower limits below $\hat{I}_{knn}$ in all cases shown in panel d). The associated composite estimator consequently fails to reduce either the bias or the rmse of estimation.


\section{Appendix}
\subsection{Additional proofs}
\begin{proof}
(Lemma \ref{lem:lbndCorr}) Let $g(x,z)$ be the multivariate Gaussian
density with the same unconditional first and second moments as $f(x,z),$
and with marginal Gaussian density $g(x)=f(x).$ Thus, $\mathrm{\mathbb{V}_{\mathit{g}}}[\left(X,Z\right)]=\mathrm{\mathbb{V}_{\mathit{f}}}[\left(X,Z\right)].$
We use subscripts to identify the relevant joint density throughout.
Notice that 
\[
I_{\mathit{f}}(X;Z)=\mathbb{E_{\mathit{f}}}\left\{ \mathrm{log}\frac{g(X,Z)}{f(X)g(Z)}\right\} -\mathbb{E_{\mathit{f}}}\left\{ \mathrm{log}\frac{g(X,Z)f(Z)}{f(X,Z)g(Z)}\right\} \geq\mathbb{E_{\mathit{f}}}\left\{ \mathrm{log}\frac{g(X,Z)}{f(X)g(Z)}\right\} ,
\]
where the second expectation of the equality is seen to be non-positive
by applying Jensen's inequality and then integrating first with respect
to $x$. Furthermore, 
\begin{equation}
\mathbb{E_{\mathit{f}}}\left\{ \mathrm{log}\frac{g(X,Z)}{f(X)g(Z)}\right\} =\mathbb{E_{\mathit{g}}}\left\{ \mathrm{log}\frac{g(X,Z)}{f(X)g(Z)}\right\} =I_{\mathit{g}}(X;Z),\label{eq:careMS-1}
\end{equation}
because $\mathbb{E}_{f}[\mathrm{log\{}g(\cdot)\}]=\mathbb{E}_{g}[\mathrm{log}\{g(\cdot)\}].$
For example, $\mathbb{E}_{f}[\mathrm{log}\{g(X,Z)\}]=\mathbb{E}_{g}[\mathrm{log}\{g(X,Z)\}]$
because 
\begin{eqnarray*}
\mathbb{E}_{f}\left\{ \left(\begin{array}{c}
X-E[X]\\
Z-E[Z]
\end{array}\right)^{\mathrm{T}}\mathrm{\mathbb{V}_{\mathit{g}}}[\left(X,Z\right)]^{-1}\left(\begin{array}{c}
X-E[X]\\
Z-E[Z]
\end{array}\right)\right\}  & =\\
\mathrm{tr}\left\{ \mathrm{\mathbb{V}_{\mathit{g}}}[\left(X,Z\right)]^{-1}\mathbb{E}_{f}\left[\left(\begin{array}{c}
X-E[X]\\
Z-E[Z]
\end{array}\right)\left(\begin{array}{c}
X-E[X]\\ 
Z-E[Z]
\end{array}\right)^{\mathrm{T}}\right]\right\}  & =d,
\end{eqnarray*}
since $\mathrm{\mathbb{V}_{\mathit{g}}}[\left(X,Z\right)]=\mathrm{\mathbb{V}_{\mathit{f}}}[\left(X,Z\right)].$
Evaluating $I_{\mathit{g}}(X;Z)$=$h_{g}(X)+h_{g}(Y)-h_{g}(X,Y)$
is straightforward since the marginal and joint densities under $g$
are all Gaussian. We find 
\[
I_{\mathit{g}}(X;Z)=\frac{1}{2}\mathrm{log\left\{ \frac{det(\mathbb{V_{\mathit{g}}}\left[X\right])det(\mathbb{V_{\mathit{g}}}\left[\mathit{Z}\right])}{det(\mathbb{V_{\mathit{g}}}\left[(X,Z)\right])}\right\} }=\frac{1}{2}\mathrm{log\left\{ \frac{det(\mathbb{V_{\mathit{f}}}\left[X\right])det(\mathbb{V_{\mathit{f}}}\left[\mathit{Z}\right])}{det(\mathbb{V_{\mathit{f}}}\left[(X,Z)\right])}\right\} },
\]
since $g$ and $f$ have identical second moments by construction.
The stated results are then obtained by partitioning of the matrix
$\mathbb{V}_{f}\left[(X,Z)\right].$
\end{proof}

\begin{proof}
(Theorem \ref{thm:XgZasymp}). We have that $\nu(X_{n}|Z_{n})=\nu(\breve{X}_{n}|Z_{n})\rightarrow0$.
Since $\nu(\breve{X}_{n}|Z_{n})=\mathbb{E}\{\mathbb{V}[\breve{X}_{n}|Z_{n}]\}$
$=\mathbb{E}\{(\mathit{\breve{X}_{n}-\mathbb{E}[\breve{X}_{n}|Z_{n}]})^{2}\}$,
it follows that $\breve{X}_{n}-\mathbb{E}[\breve{X}_{n}|Z_{n}]$ converges
to zero in mean square (in $L^{2}$) and therefore $\breve{X}_{n}-\mathbb{E}[\breve{X}_{n}|Z_{n}]\rightarrow^{pr}0$.
Consider estimating $\breve{X}_{n}$ based on observation of $Z_{n}$
as follows: the estimator $\hat{X}_{n}(Z_{n})$ is equal to a point
in the support of $\breve{X}_{n}$ which minimises the Euclidean distance
from $\mathbb{E}[\breve{X}_{n}|Z_{n}]$. Let $x\in\mathbb{R},\breve{x}_{n}\mathcal{\in\breve{X}}_{n}$
and notice that if $|\breve{x}_{n}-x|<\epsilon^{*},$ then $\mathit{|}\breve{x}_{n}-x|<\epsilon_{n}^{*}<\mathit{|}\breve{x}_{n}'-x|$,
that is $x$ is closer to $\breve{x}_{n}$ than to any other point
$\breve{x}_{n}'$ in $\mathcal{\breve{X}}_{n}$. Therefore, if $|\mathit{\breve{X}_{n}-}E[\breve{X}_{n}|Z_{n}]|<\epsilon^{*},$
$\mathbb{E}[\breve{X}_{n}|Z_{n}]$ is closer to $\breve{X}_{n}$ than
to any other point in the support, the estimator $\hat{X}_{n}(Z_{n})$
is uniquely defined, and that estimator recovers $\breve{X}_{n}$
without error $\left(\hat{X}_{n}(Z_{n})=\breve{X}_{n}\right).$ Thus,
the probability of estimation error, $p_{\mathrm{error}},$ satisfies
\[
p\mathrm{_{error}}=P\left\{ \mathit{\hat{X}_{n}\left(Z_{n}\right)\neq\breve{X}_{n}}\right\} \leq\mathbb{P}\left\{ |\mathit{\breve{X}_{n}-}E[\breve{X}_{n}|Z_{n}]|\geq\epsilon^{*}\right\} .
\]
Since $\breve{X}_{n}-\mathbb{E}[\breve{X}_{n}|Z_{n}]\rightarrow^{pr}0$,
$p_{\mathrm{error}}$ must therefore tend to zero as $n\rightarrow\infty.$
Fano's Inequality gives 
\[
H(p\mathrm{_{error}})+p_{\mathrm{error}}\mathrm{log}|\mathcal{X}_{\mathit{n}}|\geq H(\breve{X}_{n}|Z_{n})=H(X_{n}|Z_{n}),
\]
since $|\mathcal{\breve{X}}_{\mathit{n}}|=|\mathcal{X}_{\mathit{n}}|$
and the rescaling does not change the conditional entropy. Therefore
$H(X_{n}|Z_{n})\rightarrow0$ as $n\rightarrow\infty.$
\end{proof}

\subsection*{Models, parametrisations and algorithms used in the simulation study of Section \ref{MCarlo}}

Figures 1 and 2 present simulation results for two data generation
mechanisms. In both, $Z=\beta X+\varepsilon$ with $\varepsilon\sim N(0,\sigma_{\varepsilon}^{2})$
and $\varepsilon$ independent of $X.$ The two models, together with
the schemes used to generate parameter vectors for the results shown
in Figures 1 and 2, are as follows:
\begin{enumerate}
\item \textit{Bivariate normal }\emph{model}: $X\sim N(0,\sigma_{{\scriptscriptstyle{X}}}^{2}).$
Model parameters were sampled as follows: i) $\beta$ uniformly distributed on $(1,10)$; ii) $\sigma_{\varepsilon}^{2}=10^{\theta_{1}}$
with $\theta_{1}$ uniformly distributed on $(-2,2)$; and iii) $\sigma_{{\scriptscriptstyle{X}}}^{2}=10^{\theta_{2}}$
with $\theta_{2}$ uniformly distributed on $(-2,2)$.
\item \textit{Mixture model}: $X$ is an equally-weighted mixture of 2 normal
distributions, that is $f_{{\scriptscriptstyle{X}}}(x)=\frac{1}{2}N(\mu_{1},\sigma_{1}^{2})+\frac{1}{2}N(\mu_{2},\sigma_{2}^{2})$. Model parameters were sampled as
follows: i) $\beta=10^{\theta_{1}}$, with $\theta_{1}$ uniformly
distributed on $(-1,1)$; ii) $\sigma_{\varepsilon}^{2}=10^{\theta_{2}}$
with $\theta_{2}$ uniformly distributed on $(-2.5,2.5)$. We set
$\mu_{1}=-\mu_{2}=5$ and $\sigma_{1}^{2}=\sigma_{2}^{2}=25/4.$
\end{enumerate}

The mixture model allows precise evaluation of $I(X;Z)=h(Z)-h(Z|X)$
via Monte Carlo sampling. We have $h(Z|X)=\frac{1}{2}\mathrm{log(2\pi e\sigma_{\varepsilon}^{2})}$.
Note that the marginal density $f(Z)$ is also an equally-weighted
mixture of 2 normals which we can express in closed form. Hence, we
can also estimate $h(Z)=-\mathbb{E}[\mathrm{log}f(Z)]$ as the Monte
Carlo average of $\mathrm{log}f(Z_{m})$ where $Z_{m}$ ($m=1,...,M$)
is a draw from the mixture model. For our numerical calculations we
set $M=10^{5},$ and monitored convergence of the Monte Carlo average.

Computations were implemented in R (version 2.12.2). The non-parametric
estimation of $\mathbb{E}[\tilde{X}|Z]$ was performed using the `smooth.spline'
function (an implementation of smoothing splines \cite{Green1994}) with
the number of knots set to $10$; the smoothing parameter was
chosen using cross-validation on the original dataset; all other parameters
were set to their default values. $90\%$ $\mathrm{BC_{a}}$ confidence
intervals were calculated from $B=2000$ bootstrap replications (using
the `boot' package). For the $k$-nearest neighbour estimation of mutual information \cite{Kraskov2004}
we used the authors' `MIxnyn' function within their MILCA suite
(available at \url{http://www.klab.caltech.edu/~kraskov/MILCA/}).

\bibliographystyle{nar}
\bibliography{BiometrikaMendeleyCollection}

\end{document}